\documentclass[11pt]{article}
\usepackage{times,fullpage,amssymb,amsmath,amsthm,hyperref,xcolor,multirow,makecell,authblk}

\newtheorem{lemma}{Lemma}[section]
\newtheorem{theorem}[lemma]{Theorem}
\newtheorem{claim}[lemma]{Claim}
\newtheorem{problem}{Problem}
\newtheorem{definition}{Definition}

\title{Improved Strongly Polynomial Work-Span Tradeoffs for \\ Directed Single Source Shortest Paths}

\begin{document}
    \author[1]{Isaac M. Hair}
    \author[2]{George Z. Li}
	\author[2]{Jason Li}
	\author[2]{Junkai Zhang}
	\affil[1]{UCSB, UCLA    \texttt{isaacmhair@gmail.com}}
	\affil[2]{Carnegie Mellon University $\{\texttt{gzli}, \texttt{jmli},\texttt{junkaizh}\}$\texttt{@cs.cmu.edu}}
    \date{}
\maketitle

\begin{abstract}
    We revisit the single-source shortest paths (SSSP) problem on directed graphs with nonnegative real weights and give a deterministic parallel algorithm with $O(n^{1+o(1)}t^2 + m^{1+o(1)})$ work and $\tilde{O}(n/t)$ span, for any $t \in [1, n]$. This matches (up to subpolynomial factors) the tradeoff due to [Shi and Spencer '99] for \emph{undirected} graphs with nonnegative real weights.
\end{abstract}

\section{Introduction}

Consider the single-source shortest paths problem on a directed graph with non-negative \emph{real} edge weights, also known as the non-negative strongly polynomial setting. We study this problem under the \emph{work-span} model in parallel (PRAM) computing, where the work of a parallel algorithm is its sequential running time, and the span is the longest chain of dependencies required by the algorithm. There are two natural parallel algorithms in this setting: Dijkstra's algorithm, which can be implemented in $\tilde O(m)$ work and $\tilde O(n)$ span,\footnote{$\tilde O(\cdot)$ hides factors poly-logarithmic in $n$.} and the repeated squaring algorithm that computes $O(\log n)$ many min-plus products, which takes $O(n^3)$ work and $\text{polylog}(n)$ span. A long line of work has obtained work-span tradeoffs between these two extremes; see Table~\ref{fig:past-work} for a detailed list.

In this work, we obtain the following work-span tradeoff.

\begin{theorem}[Informal version of Theorem \ref{thm:final}]
There is a deterministic parallel algorithm that computes single-source shortest paths on non-negative real-weighted directed graphs with $O(n^{1+o(1)}t^2 + m^{1+o(1)})$ work and $\tilde{O}(n/t)$ span, for any $t \in [1,n]$, where $n$ and $m$ are the number of vertices and edges in the graph respectively.
\end{theorem}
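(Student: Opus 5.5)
The plan is to simulate Dijkstra's algorithm in phases, settling about $t$ vertices per phase. This gives $\tilde O(n/t)$ phases, each of polylogarithmic span, which is the approach of Shi and Spencer for undirected graphs. In undirected graphs the $t$ closest unsettled vertices can be found by exploring the $t$-nearest-neighbor structure around the settled set. Directed graphs lack the symmetry that makes this work, so instead we use a shortcut structure. For each vertex $v$ we precompute (approximately) its $t$ nearest out-neighbors in the graph metric, $N_t(v)$, together with exact distances $d(v,u)$ for $u\in N_t(v)$. We then add shortcut edges $(v,u)$ of weight $d(v,u)$. This preprocessing is done recursively: nearest-$t$ sets can be obtained by running a bounded Dijkstra from every vertex in parallel, but that costs $O(t^2)$ edge work per vertex only if degrees are bounded. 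So we first reduce to a graph of maximum degree $O(m/n)$ by the standard vertex-splitting trick, which preserves distances and keeps the vertex count $O(m)$. We also restrict each bounded search to the $t$ lightest relevant edges, which bounds the per-source work by $O(t^2)$.

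Given shortcuts, a phase works as follows. Let $S$ be the settled set with known distances. Let $F$ be the $t$ unsettled vertices with smallest true distance, for which we must compute $d(s,\cdot)$. Each $u\in F$ has a shortest path whose last settled vertex $x\in S$ is followed by unsettled vertices only. The key claim is that the portion from $x$ to $u$ has at most $t$ unsettled vertices, all of them in $F$ (by monotonicity of prefixes under nonnegative weights). Hence $u\in N_t(x')$, where $x'$ is the first unsettled vertex after $x$ and $x'$ is an out-neighbor of $x$. So the candidate value $\min_{x\in S,\,(x,x')\in E,\,u\in N_t(x')} d(x)+w(x,x')+d(x',u)$ equals the true distance for all of $F$. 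We take the $t$ smallest candidates by parallel min/selection in polylog span. Doing this naively rescans $S$ each phase. Instead we maintain, for each unsettled $u$, a tentative value in a parallel priority queue. When $x$ is settled, we push its relaxations through the frontier edges and the $N_t$ lists. Each settled vertex contributes $O(\deg\cdot t)$ work once, for $O(mt)$ total, which can be amortized down using the degree reduction and the fact that only the $N_t$ list of $x'$ matters.

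The main obstacle is computing $N_t(v)$ for all $v$ within the work budget $n^{1+o(1)}t^2+m^{1+o(1)}$ and span $\tilde O(n/t)$ deterministically. A truncated Dijkstra from each vertex has span $\Theta(t)$, which is fine when $t\le \sqrt n$ but not in general. The first approach I would try is a recursive/hop-doubling scheme. Compute $N_{t/2}$ sets recursively, then obtain $N_t(v)$ by combining $N_{t/2}(v)$ with $N_{t/2}(w)$ for $w\in N_{t/2}(v)$ via a min-plus merge and selection. This costs $O(t^2)$ per vertex per level with polylog span per level, over $O(\log t)$ levels, matching the work bound up to logarithmic factors. The subtle point to verify is correctness. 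The $t$-th nearest vertex from $v$ may lie on a path whose intermediate vertices are not all within the $t/2$-ball of one center. To handle this, I would argue via prefix-closure: every vertex on a shortest path to a member of $N_t(v)$ is itself in $N_t(v)$. Splitting that path at the last vertex within $N_{t/2}(v)$ then reduces the remaining segment to a $t/2$-neighborhood query. If this fails, an extra level of recursion (sampling hitting-set centers) would contribute the $n^{o(1)}$ factor.

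Combining the preprocessing with the $\tilde O(n/t)$ phases yields Theorem~\ref{thm:final}.
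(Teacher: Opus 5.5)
There is a genuine gap. You use \emph{static} $t$-nearest lists computed in $G_0$, and for these the step ``Hence $u\in N_t(x')$'' is false. Your monotonicity argument only shows that the \emph{unsettled} vertices closer to $x'$ than $u$ lie in $F$. It says nothing about \emph{settled} vertices, and in a directed graph $x'$ can reach arbitrarily many of them cheaply, so they crowd $u$ out of $N_t(x')$.

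Concretely, let $M\ge t\ge 2$ with $t\mid M$. Take edges $s\to b_i$ and $x'\to b_i$ of weight $0$ (for $i\le M$), $s\to x'$ and $x'\to u$ of weight $1$, $u\to z$ of weight $0$, and $s\to z$ of weight $5$. Since $N_t(x')\subseteq\{x',b_1,\dots,b_M\}$, once your phases have settled the $b_i$ (at distance $0$), $u$ receives no candidate and $z$ receives only the candidate $5$. The next phase then settles $x'$ and $z$, giving $z$ the label $5\neq d(s,z)=2$.

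No more careful selection rule rescues the plan. Take a unit-weight path $s\to x'_1\to x'_2\to\cdots$ whose vertices all have zero-weight edges into the same $t$ settled vertices $b_i$. No path vertex beyond the first unsettled one ever receives a candidate, so each phase advances by one vertex and $\Omega(n)$ phases are needed. Static lists cannot avoid this: which vertices are already settled when $x'$ reaches the boundary depends on the unknown distances from $s$, and adding the lists as shortcut edges does not change that.

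What is needed, and what the paper builds, are near-lists taken in the \emph{current unsettled graph} $G_0[V_0\setminus R]$, maintained under the batched deletion of each newly settled set. Essentially all of the difficulty lies there. Deleting a settled vertex damages every list containing it, so the paper does three things:
(i) it keeps each vertex in at most $10k$ level-$i$ lists by making heavy the vertices that appear in too many of them, with level-$i$ lists computed in $G[V\setminus H_{i+1}]$ over a hierarchy with $|H_i|\le n/2^i$;
(ii) after each deletion batch, it makes heavy the sources of the lists that lost the most entries, which by the counter game of Claim~\ref{claim:countergame-update} keeps every surviving list either complete or of size greater than $k$;
(iii) it rebuilds levels on the schedule $\tau_i$.

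Rebuilding lists in turn requires searches over roughly $t$ hops. So the paper stacks LNL structures with hop bounds $T/\rho^j$ (where $T\approx t$ and $\rho=2^{\lceil\sqrt{\log n}\rceil}$), each initialized and updated through the one below it with marker/rollback support and transcript-width accounting. This, not a hitting-set fallback, is where the $n^{o(1)}$ comes from. Each Dijkstra round then searches only from the $T$ extracted boundary vertices, for $\tilde{O}(T^3)$ work per round. This also avoids the $O(mt)$ relaxation cost of your scheme, which is too large when $m/n\gg t$.

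Finally, even for static lists your size-doubling recursion is incorrect. Take $v\to w$ and $w\to a_1,\dots,a_{t/2-2}$ of weight $0$, plus a unit-weight path $w\to c_1\to c_2\to\cdots$. The union of $N_{t/2}(w')$ over $w'\in N_{t/2}(v)$ then misses $c_2,\dots,c_{t/2}\in N_t(v)$. The composition that works keeps the list size fixed and grows the hop bound, as in Lemmas~\ref{lem:fromStoSprime} and~\ref{lem:usetosearch}.
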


Notably, our work-span tradeoff is the first to simultaneously match Dijkstra's algorithm ($t=1$) and repeated squaring ($t=n$). We also match the tradeoff of Shi and Spencer for \emph{undirected} single-source shortest paths~\cite{shi1999time}, and the tradeoff of Spencer for directed single-source \emph{reachability}~\cite{spencer1997time}, which was subsequently generalized to strongly connected components with the same tradeoff~\cite{karczmarz2025deterministic}.

\begin{table}
  \centering
  \def\arraystretch{1.4}%
  \begin{tabular}{|c|c|c|c|}
  \hline
  \textbf{Algorithm} & \textbf{Work} & \textbf{Depth} & \textbf{Restrictions} \\ \hline
    parallel Dijkstra \cite{paige1985parallel} & $\tilde O(m)$ & $\tilde O(n)$ & --- \\ \hline
    \makecell{repeated min-plus product~\cite{williams2014faster}} & $n^{3-o(1)}$ & $\text{polylog}(n)$ & --- \\ \hline
    \makecell{Spencer~\cite{spencer1997time}} & $\tilde O(n^2t)$ & $\tilde O(n/t)$ & --- \\ \hline
    \makecell{Bringmann, Hansen, and Krinninger~\cite{bringmann2017improved}} & $\tilde O(mn+t^3)$ & $\tilde O(n/t)$ & --- \\ \hline
    \multirow{2}{*}{Shi and Spencer \cite{shi1999time}} & $\tilde O(m + nt^2)$ & $\tilde O(n/t)$ & \multirow{2}{*}{undirected} \\ \cline{2-3}
   & $\tilde O(mn/t + t^3)$ & $\tilde O(n/t)$ & \\ \hline
    \multirow{2}{*}{Karczmarz, Nadara, and Sokołowski~\cite{karczmarz2026strongly}} & $\tilde O(m^{5/3}t^2+m^{3/2}t^{7/2})$ & $\tilde O(m/t)$ & $t \le m^{1/2}$ \\ \cline{2-4}
     & $\tilde O(m + n^{9/5}t^{17/5})$ & $\tilde O(n/t)$ & $t \le n^{1/17}$ \\ \hline
    \makecell{\textbf{This work}} & $O(m^{1+o(1)} + n^{1+o(1)}t^2)$ & $\tilde{O}(n/t)$ & --- \\ \hline
\end{tabular}
  \caption{History of parallel algorithms for non-negative strongly polynomial single-source shortest paths, originally from~\cite{karczmarz2026strongly}. The tradeoff parameter $t$ has default range $[1,n]$.}\label{fig:past-work}
\end{table}

\subsection{Techniques}

At a very high level, our algorithm is a twice-hierarchical adaptation of the algorithm of Karczmarz, Nadara, and Sokołowski~\cite{karczmarz2026strongly}: we construct a hierarchy of the \emph{heavy} vertices from~\cite{karczmarz2026strongly}, as well as a separate \emph{hop-bounded} hierarchy of \emph{near-lists}, an object introduced by Spencer~\cite{spencer1997time} and also used by~\cite{karczmarz2026strongly}.

In more detail, the algorithm tries to simulate $t$ consecutive steps of Dijkstra's algorithm at a time, similar to previous algorithms~\cite{spencer1997time,shi1999time,karczmarz2026strongly}. This requires finding the $t$ unprocessed vertices that are closest to the source, and then processing them. Suppose that optimistically, each vertex $v$ can maintain a \emph{near-list} of the $t$ unprocessed vertices closest to $v$, i.e., the unprocessed vertices $u$ with smallest distance from $v$ to $u$. Then, the next $t$ vertices in Dijkstra's algorithm are contained in the near-lists of the processed vertices. The algorithm can retrieve these $t$ vertices by maintaining a parallel priority queue of the near-list vertices, keyed by their estimated distances, and then calling a batched extraction of the $t$ minimum keys.

\paragraph{Heavy vertices.}

There are two issues with this approach. First, since the near-lists are restricted to unprocessed vertices, they may change as new vertices get processed. More precisely, when a vertex gets processed, all near-lists containing that vertex must be updated to find a new unprocessed vertex. A single unprocessed vertex may belong to many near-lists, in which case processing that vertex forces many near-lists to be updated. Karczmarz, Nadara, and Sokołowski~handle this issue by declaring such vertices to be \emph{heavy}. Heavy vertices are treated separately, and near-lists are then re-defined to be the $t$ closest unprocessed vertices on the induced subgraph without heavy vertices. These near-lists may induce new heavy vertices, so this process must be iterated, but the number of heavy vertices remains relatively small.

Our algorithm follows the same approach, but we do not stop at one level of heavy vertices. To build the second level of heavy vertices, we restrict our attention to the near-lists of the (first level of) heavy vertices, and declare a vertex to be heavy at the second level if it belongs to many of them. We ensure that each subsequent level has at most half the heavy vertices of the previous level, so $O(\log n)$ levels suffice for the hierarchy.

\paragraph{Hop-bounded searches.}

The second issue is that computing near-lists of size $t$ may require computing $t$-hop shortest paths. For example, if the input graph is a directed path, then the near-list of each unprocessed vertex is the $t$ unprocessed vertices after it on the path. Naively, (re-)computing a near-list using $t$-hop shortest paths requires $O(t)$ span, which is useless because spending $O(t)$ span to simulate $t$ steps of Dijkstra results in $O(n)$ span overall.

We handle this issue by introducing another hierarchy based on hop-length, i.e., the number of edges in the shortest path between two vertices. On a level with hop-bound $h$, the near-lists maintain the $t$ closest unprocessed vertices (on the induced subgraph without heavy vertices), where distance is measured by the shortest path with at most $h$ edges. The first level $h=1$ is trivial, since the near-lists are simply the $t$ closest neighbors. Each subsequent level has a geometrically larger hop-bound, and near-lists at this level are re-computed by using the near-lists at the previous level. For technical reasons, we require that the hop-bounds increase by factor $2^{\sqrt{\log n}}$ per level, which explains the $n^{o(1)}$ factors in the work.

\subsection{Related Work}

Parallel single-source shortest paths has also been studied in more restricted regimes. In the case of non-negative \emph{integral} weights, which allows for scaling-based algorithms, the state of the art algorithms have work $\tilde O(m\log W)$ and span $n^{1/2+o(1)}\log W$, where $W$ is the maximum weight~\cite{cao2023parallel,rozhovn2023parallel}. (There is also a recent improvement for non-sparse graphs~\cite{ashvinkumar2026parallel}.) In fact, these bounds are known even if the graph has \emph{negative} integral weights at least $-W$~\cite{ashvinkumar2024parallel,fischer2025simple}, a setting incomparable to our non-negative real-weighted graphs.

Another fruitful line of work considers \emph{$(1+\epsilon)$-approximate} single-source shortest paths in undirected graphs~\cite{cohen2000polylog,elkin2019hopsets}, where truly work-efficient parallel algorithms are known with $\tilde O(m)$ work and $\text{polylog}(n)$ span~\cite{andoni2020parallel,li2020faster,rozhovn2022undirected}.

\section{Preliminaries}
Let $G_0 = (V_0, E_0, w)$ be the input directed graph, where $\vert V_0\vert = n$, $\vert E_0\vert = m$, and $w : E_0 \rightarrow \mathbb{R}_{\geq 0}$ is a weight function. Assume without loss of generality that $m \geq n$, for example by deleting isolated vertices. Also assume that every vertex in $G_0$ has in-degree and out-degree at most $O(m/n)$. This can be enforced, for example, by replacing each vertex whose in/out-degree is $\omega(m/n)$ by a tree of degree $O(m/n)$. The total number of vertices and edges then increases by at most a constant factor.

Now let $G = (V, E, w)$ be any graph (throughout the paper, all graphs will be nonnegative real weighted directed graphs). Let $V(G)$ denote the vertex set of the graph. For a path $\pi$ in $G$, let $w(\pi)$ be the total weight of the path. We refer to a path using at most $h$ edges as an \emph{$h$-hop path}. For a subset $V' \subseteq V$, denote by $G[V']$ the induced subgraph of $G$ on the vertices in $V'$. Given two vertices $v, u \in V$, we use $d_{G}(v, u)$ to denote the shortest path distance from $v$ to $u$ in $G$ (if such a path exists; otherwise we set $d_{G}(v, u) = +\infty$). Similarly, we use $d_{G}^h(v, u)$ to denote the length of the shortest $h$-hop path from $v$ to $u$ in $G$ (or set $d_{G}^h(v, u) = +\infty$ if no such path exists). When a distance in an induced subgraph is written with an endpoint outside the vertex set of that induced subgraph, we interpret the corresponding distance as $+\infty$.

We use $\tilde{O}(\cdot)$ to suppress polylogarithmic factors, i.e.\ $\tilde{O}(T(n)) = O(T(n)\log^{O(1)} n)$. Denote by $[a, b]$ the set $\{a, a+1, \ldots, b\}$, and denote by $[a]$ the set $\{1, 2, \ldots, a\}$. We will make extensive use of parallel priority queues and parallel binary search trees (see, e.g., \cite{brodal1998parallel, blelloch2016just}). These data structures allow us to efficiently perform the standard operations supported by priority queues and binary search trees, except that operations of the same type can be grouped into batches of an arbitrary size $b$ and performed simultaneously in $O(b\log^{O(1)}n)$ work and $O(\log^{O(1)}n)$ span (where $n^{O(1)}$ is an upper bound on the total size of the data structure). We also use sorting networks (see, e.g., \cite{batcher1968sorting}), which allow us to sort $b$ real values in $O(b\log^{O(1)}b)$ work and $O(\log^{O(1)}b)$ span. All of these primitives have deterministic implementations.

Our SSSP algorithm will maintain copies of the data structure below, for different choices of $h$ and $k$.

\begin{definition}[$(h, k)$-Layered Near-List (LNL) Data Structure.] \label{def:LNL} \normalfont
Let $G = (V, E, w)$ be an induced subgraph of the original graph $G_0$, and let $h \in [1, n]$ and $k \in [1, n]$ be parameters\footnote{Recall that $n$ is the number of vertices in $G_0$, not the subgraph $G$.}. Set $\ell$ to be the smallest positive integer such that $n/2^\ell \leq k$. An $(h, k)$-LNL data structure for $G$ consists of the following.
\begin{enumerate}
    \item The graph $G$, stored such that every vertex $v \in V$ has a parallel priority queue containing its out-neighbors, ordered by the weights of the edges from $v$ to each out-neighbor. (The graph will be modified over time by deleting subsets of vertices, and the priority queues will be updated in response.)
    \item Subsets $H_0, \ldots, H_{\ell+1} \subseteq V$, where $H_0 = V$, $\vert H_i\vert \leq n/2^i$ for all $i \in [1, \ell]$, and $H_{\ell+1} = \emptyset$. We refer to these as \emph{heavy vertex subsets.}
    \item For all $i \in [0, \ell]$ and for all $v \in H_i$, a near-list $\text{NL}_i^h(v) \subseteq V$ satisfying the following invariants.
    \begin{enumerate}
        \item \textbf{Size Bound.} $\vert \text{NL}_i^h(v)\vert \leq 2k$.
        \item \textbf{Completeness.} If $\vert \text{NL}_i^h(v)\vert \leq k$, then there are at most $k$ vertices reachable from $v$ via an $h$-hop path in $G[V\backslash H_{i+1}]$,\footnote{If $v \in H_{i+1}$, then there are no vertices reachable from $v$ in $G[V\backslash H_{i+1}]$.} and all such vertices appear in $\text{NL}_i^h(v)$.
        \item \textbf{Distance Estimate.} Every vertex $u \in \text{NL}_i^h(v)$ has an accompanying finite distance $\tilde{d}_i(v, u)$ that satisfies
        \[d_{G_0}(v, u) \leq \tilde{d}_i(v, u) \leq d_{G[V\backslash H_{i+1}]}^h(v, u).\]
        \item \textbf{Ordering.} For all $\hat{u} \in V \backslash (H_{i+1} \cup \text{NL}_i^h(v))$ and for all $u \in \text{NL}_i^h(v)$, we have
        \[d_{G[V\backslash H_{i+1}]}^h(v, \hat{u}) \geq \tilde{d}_i(v, u).\]
    \end{enumerate}
    \textbf{Congestion.} For each $i \in [0,\ell]$, we also require that for all $u \in V$, there are at most $10k$ vertices $v \in H_i$ such that $u \in \text{NL}_i^h(v)$. (For $i = \ell$, there are at most $k$ near-lists, so this requirement is automatically satisfied.)
\end{enumerate}
\end{definition}

\noindent
Below we formalize the problem of instantiating an LNL data structure.
\begin{problem}\label{prob:init}
    The input is the (original) graph $G_0 = (V_0, E_0, w)$, along with two integers $h \in [1, n]$ and $k \in [1, n]$. The task is to construct an $(h, k)$-LNL data structure for $G_0$. We require that, upon initialization, the data structure satisfies the following additional properties:
    \begin{enumerate}
        \item For all $i \in [1, \ell]$, we have\footnote{This means that, for example, we might have $H_\ell = \emptyset$.} $\vert H_i\vert \leq n/4^i$.
        \item For all $i \in [0, \ell]$ and for all $v \in H_i$, if $\vert \text{NL}_i^h(v)\vert < 2k$, then there are strictly fewer than $2k$ vertices reachable from $v$ via an $h$-hop path in $G[V\backslash H_{i+1}]$, and all such vertices appear in $\text{NL}_i^h(v)$.
    \end{enumerate}
\end{problem}

\paragraph{Updating LNL data structures.}
As our SSSP algorithm progresses, we will update each $(h, k)$-LNL data structure using the following operations:
\begin{enumerate}
    \item \textbf{Batched Deletion.} Given a subset $Z \subseteq V$ such that $\vert Z\vert \leq k$, delete the vertices in $Z$ from the graph $G$ stored in the $(h, k)$-LNL data structure. Update all components of the data structure to maintain the required invariants with respect to this new graph.
    \item \textbf{Place Marker.} Mark the current state of all components in the $(h, k)$-LNL data structure (including the current version of the graph $G$).
    \item \textbf{Return to Marker.} Revert all components of the data structure to the state indicated by the most recent marker (and then remove the marker).
\end{enumerate}
We refer to these three operations as \emph{update queries}. They allow us to maintain a collection of near-lists with respect to the unsettled vertices of $G_0$.

In the analysis, it will be useful to reason about the combinatorial properties of an update sequence. Define the \emph{transcript} of an update sequence of length $r$ to be the length $r+1$ vector $\mathbf{t} \in \mathbb{Z}^{r+1}$ that starts with the initial number of vertices in $G$, and then records the total number of vertices in $G$ after each update. Now define the \emph{transcript width} of such an update sequence as
\[\max_{c \in [0,n]} \vert \{z \in [r] : \mathbf{t}_z > c \geq \mathbf{t}_{z+1}\}\vert.\]
In other words, the transcript width records the maximum, over all choices of vertex count $c$, of the number of times that the graph $G$ in the $(h, k)$-LNL data structure will transition from having more than $c$ vertices to having at most $c$ vertices, over the entire sequence of updates.

\section{Using LNL Data Structures to Perform Searches}\label{sec:search}
LNL data structures are useful for parallel SSSP because they allow us to efficiently perform a hop-restricted search in $G$ from any given query vertex. In the final algorithm, this will allow us to grow the required shortest path tree approximately $k$ vertices at a time, with each step requiring only $n^{o(1)}$ span. We start by giving a supporting lemma for the searches:

\begin{lemma} \label{lem:fromStoSprime}
    Given access to an $(h, k)$-LNL data structure for an induced subgraph $G = (V, E, w)$ of the original graph $G_0$, there is a deterministic parallel algorithm with $\tilde{O}(k^2)$ work and $\tilde{O}(1)$ span that behaves as follows. It takes as input a query subset $S \subseteq V$ such that $\vert S\vert \leq k$, along with a distance estimate $\tilde{d}(s) \in \mathbb{R}_{\geq 0}$ for each $s \in S$, and it outputs a subset $S' \subseteq V$, along with a distance estimate $\tilde{d}'(s') \in \mathbb{R}_{\geq 0}$ for each $s' \in S'$, such that the following holds:
    \begin{enumerate}
        \item \textbf{Size Bound.} $\vert S'\vert \leq k$.
        \item \textbf{Completeness.} If $\vert S'\vert < k$, then strictly fewer than $k$ vertices are reachable from a vertex in $S$ via an $h$-hop path in $G$, and all such vertices appear in $S'$.
        \item \textbf{Distance Estimate.} $\min_{s \in S} (\tilde{d}(s) + d_{G_0}(s, s')) \leq \tilde{d}'(s') \leq \min_{s \in S} (\tilde{d}(s) + d_G^h(s, s'))$ for all $s' \in S'$.
        \item \textbf{Ordering.} For all $\hat{s} \in V \backslash S'$ and for all $s' \in S'$, we have $\min_{s \in S} (\tilde{d}(s) + d_G^h(s, \hat{s})) \geq \tilde{d}'(s')$.
    \end{enumerate}
\end{lemma}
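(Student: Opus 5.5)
The algorithm sweeps the heavy-vertex hierarchy bottom-up in $\ell+1 = O(\log n)$ rounds. Each round expands the current frontier by one layer of near-lists and then truncates to the $k$ best candidates. Concretely, set $T_0 = S$ with estimates $\tilde d_0 = \tilde d$. For $i = 0, 1, \ldots, \ell$:
\begin{itemize}
\item Form the candidate multiset consisting of $T_i$ (with its current estimates), together with, for every $v \in T_i \cap H_i$ and every $u \in \text{NL}_i^h(v)$, the candidate value $\tilde d_i(v) + \tilde d_i(v,u)$.
\item For each vertex, keep the minimum of its candidate values.
\item Sort the vertices by this value using a sorting network, and let $T_{i+1}$ be the $k$ smallest (ties broken arbitrarily).
\end{itemize}
Output $S' = T_{\ell+1}$. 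Each round touches at most $k \cdot 2k + k$ candidates by the Size Bound of Definition~\ref{def:LNL}, so it takes $\tilde O(k^2)$ work and polylogarithmic span. Since $\ell = O(\log n)$, the totals are $\tilde O(k^2)$ work and $\tilde O(1)$ span. The Size Bound for $S'$ is immediate. The lower bound in the Distance Estimate follows by induction, because every estimate is realized by a concatenation of walks in $G_0$: each near-list estimate satisfies $\tilde d_i(v,u) \ge d_{G_0}(v,u)$, and the triangle inequality does the rest.

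The key structural step is a path decomposition. Take any $s \in S$, any $\hat s$, and a shortest $h$-hop path $\pi$ from $s$ to $\hat s$ in $G$. Define $x_0 = s$. For $i \ge 1$, let $x_i$ be the first vertex of $\pi$ at or after $x_{i-1}$ that lies in $H_i$. If no such vertex exists, set $x_i = \hat s$ and stop; since $H_{\ell+1} = \emptyset$, this happens by level $\ell+1$. The subpath from $x_{i-1}$ to $x_i$ (excluding the endpoint $x_i$ when $x_i \in H_i$) avoids $H_i$. Its last edge then enters $x_i$, so the appropriate reading is that $x_i$ is reached by an $h$-hop path in $G[V \setminus H_i]$ up to its endpoint. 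I will need to handle this endpoint subtlety carefully, most likely by letting the segment $x_{i-1} \to x_i$ be measured in $G[V \setminus H_i]$ for all vertices except the last. Each segment has at most $h$ hops, so the near-list of $x_{i-1} \in H_{i-1}$ at level $i-1$ applies: it is taken in $G[V \setminus H_i]$ with hop bound $h$.

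The main obstacle is the invariant for the Ordering property; the upper bound in the Distance Estimate and Completeness come from the same argument. The invariant I would prove by induction on $i$ is this. Let $P_i = \tilde d(s) + w(\pi[s, x_i])$. Then either $x_i \in T_i$ with estimate at most $P_i$, or $T_i$ already contains $k$ vertices all with estimate at most $P_i$. The inductive step uses the Ordering and Completeness invariants of $\text{NL}_{i-1}^h(x_{i-1})$, splitting into three cases:
\begin{itemize}
\item If $x_i$ appears in the near-list, its candidate value is at most $P_{i-1} + d^h_{G[V\setminus H_i]}(x_{i-1}, x_i) \le P_i$.
\item If $x_i$ does not appear, Completeness forces $|\text{NL}| > k$. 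Ordering then gives more than $k$ candidates with value at most $P_i$.
\end{itemize}
In either case, truncating to the $k$ smallest preserves the dichotomy. At $i = \ell+1$ the dichotomy gives Ordering for $\hat s \notin S'$. For $\hat s \in S'$ it gives the upper bound $\tilde d'(\hat s) \le \tilde d(s) + d_G^h(s, \hat s)$.

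Completeness follows from the same invariant. If $|S'| < k$, then no round ever had $k$ candidates, so the second branch of the dichotomy never occurs. Hence every $\hat s$ reachable within $h$ hops lands in $S'$, and fewer than $k$ such vertices exist.
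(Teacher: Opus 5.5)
There is a genuine gap, and it is exactly the endpoint subtlety you flagged but did not resolve. Your algorithm performs only near-list relaxations, and these can never produce a vertex $x_i\in H_i$ that the path enters from outside $H_i$. The level-$(i-1)$ near-lists are defined in $G[V\backslash H_i]$, so for $x_i\in H_i$ we have $d^h_{G[V\backslash H_i]}(x_{i-1},x_i)=+\infty$. Completeness of $\text{NL}_{i-1}^h(x_{i-1})$ only concerns vertices reachable in $G[V\backslash H_i]$, and Ordering only quantifies over $\hat u\notin H_i$. So when $x_i\in H_i$, which is the typical case, both of your bullets fail. The first gives only the bound $+\infty$. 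In the second, $x_i\notin\text{NL}$ does not force $|\text{NL}|>k$.

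This breaks the algorithm itself, not just the proof. Take $G=G_0$ with vertices $s,b$ and one edge $s\to b$. Set $k=2$ (so $\ell=1$), $H_1=\{b\}$, $H_2=\emptyset$, $\text{NL}_0^h(s)=\{s\}$, $\text{NL}_0^h(b)=\emptyset$, and $\text{NL}_1^h(b)=\{b\}$, all with zero estimates. This is a valid LNL structure. On $S=\{s\}$, your procedure never places $b$ in any $T_i$ and returns $S'=\{s\}$. That violates Completeness, since $b$ is reachable from $s$ in one hop.

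The missing ingredient is the paper's edge-relaxation step. In each round, after the near-list relaxation and a truncation to $k$, expand every retained vertex by its $k$ lightest out-edges in $G$, and truncate again. These edges are read from the out-neighbor priority queues of Definition~\ref{def:LNL}, so the work stays $\tilde{O}(k^2)$. The induction should then target the predecessor $c$ of $x_i$ on $\pi$, or $c=x_i=\hat s$ if $\pi$ contains no further vertex of $H_i$.

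The subpath from $x_{i-1}$ to $c$ lies in $G[V\backslash H_i]$ and uses at most $h$ hops. So your two near-list cases correctly show one of two things: $c$ survives the first truncation with estimate at most $\tilde d(s)+w(\pi[s,c])$, or $k$ vertices already lie below that value. The edge step then delivers $x_i$ with estimate at most $P_i$. The exception is when $(c,x_i)$ is not among the $k$ lightest out-edges of $c$. In that case those $k$ out-neighbors all receive estimates at most $P_i$, which is your second branch.

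With this step added, your dichotomy invariant is a clean equivalent of the paper's $\tilde{d}'_{\max}$ and dominance argument. Your derivations of Ordering, the distance upper bound, and Completeness then go through.
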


\begin{proof}
    Initialize $X_0 = S$, and for each $s \in X_0$ set $\Delta_0(s) = \tilde{d}(s)$. For $i = 0, 1, \ldots, \ell$, do the following:
    \begin{enumerate}
        \item \textbf{Near-List Relaxation.} Compute
        \[X_i^{(1)} = X_i \cup \bigcup_{v \in X_i \cap H_i}\text{NL}_i^h(v).\]
        Then for each $u \in X_i^{(1)}\backslash X_i$, set
        \[\Delta_i'(u) = \min_{\substack{v \in X_i \cap H_i :\\ u \in \text{NL}_i^h(v)}} (\Delta_i(v) + \tilde{d}_i(v, u)),\]
        where $\tilde{d}_i(v,u)$ is the distance stored in the $(h, k)$-LNL data structure for $G$. Similarly, for each $u \in X_i$, set
        \[\Delta_i'(u) = \min \Big(\Delta_i(u), \min_{\substack{v \in X_i \cap H_i :\\ u \in \text{NL}_i^h(v)}} (\Delta_i(v) + \tilde{d}_i(v, u))\Big).\]
        The work is $\tilde{O}(k^2)$ and the span is $\tilde{O}(1)$, because we can just read the appropriate sets from the $(h, k)$-LNL data structure for $G$, and then take the minimums in parallel.
        \item \textbf{Truncation.} Sort all $u \in X_i^{(1)}$ by $\Delta_i'(u)$, and let $X_i^{(2)} \subseteq X_i^{(1)}$ be the set containing the $k$ vertices with smallest $\Delta_i'$ from $X_i^{(1)}$. (Ties are broken arbitrarily. If $\vert X_i^{(1)}\vert < k$, then set $X_i^{(2)} = X_i^{(1)}$.) Using a sorting network, the work is $\tilde{O}(k^2)$ and the span is $\tilde{O}(1)$.
        \item \textbf{Edge Relaxation.} For all $u \in X_i^{(2)}$, let $F(u)$ be either (i) the $k$ closest out-neighbors of $u$ in $G$, if $u$ has at least $k$ out-neighbors in $G$, or else (ii) the set of all out-neighbors of $u$ in $G$. Now compute
        \[X_i^{(3)} = X_i^{(2)} \cup \bigcup_{u \in X_i^{(2)}} F(u).\]
        For each $z \in X_i^{(3)}\backslash X_i^{(2)}$, set
        \[\Delta_{i+1}(z) = \min_{\substack{u \in X_i^{(2)}\\ z \in F(u)}} (\Delta_i'(u) + w(u,z)),\]
        and for each $z \in X_i^{(2)}$, set
        \[\Delta_{i+1}(z) = \min \big(\Delta_i'(z), \min_{\substack{u \in X_i^{(2)}\\ z \in F(u)}} (\Delta_i'(u) + w(u,z))\big).\]
        Because the $(h, k)$-LNL data structure for $G$ already has each vertex stored along with a parallel priority queue for its out-neighbors, the work is $\tilde{O}(k^2)$ and the span is $\tilde{O}(1)$.
        \item \textbf{Truncation.} Sort all $z \in X_i^{(3)}$ by $\Delta_{i+1}(z)$, and let $X_{i+1} \subseteq X_i^{(3)}$ be the set containing the $k$ vertices with smallest $\Delta_{i+1}$ from $X_i^{(3)}$. (As before, ties are broken arbitrarily, and if $\vert X_i^{(3)}\vert < k$, then set $X_{i+1} = X_i^{(3)}$.)
    \end{enumerate}
    Take the resulting set $X_{\ell+1}$ as the set $S'$, and return $\tilde{d}'(s') = \Delta_{\ell+1}(s')$ for each $s' \in S'$. Because there are $O(\log n)$ rounds, the total work is $\tilde{O}(k^2)$, and the total span is $\tilde{O}(1)$. All steps are deterministic.

    All that remains is to argue that $S'$ and the distance estimates $\tilde{d}'(s')$ for each $s' \in S'$ satisfy the requirements of the lemma. As shorthand, for each $u \in V$, set
    \[
        \Psi(u)=\min_{s \in S}\bigl(\tilde{d}(s)+d_{G_0}(s,u)\bigr)
        \qquad\text{and}\qquad
        \Phi(u)=\min_{s \in S}\bigl(\tilde{d}(s)+d_G^h(s,u)\bigr).
    \]
    Also define a parameter $\tilde{d}'_{\max}$ as follows: if $\vert S'\vert = k$, then set $\tilde{d}'_{\max} = \max_{s' \in S'} \tilde{d}'(s')$, and otherwise set $\tilde{d}'_{\max} = +\infty$. We start by giving a simple consequence of the definition of $\tilde{d}'_{\max}$.

    \begin{claim} \label{claim:dominance}
        For all $i \in [0, \ell]$, we have $\vert \{u \in X_i^{(1)} : \Delta_i'(u) < \tilde{d}'_{\max}\}\vert < k$, and we have $\vert \{z \in X_i^{(3)} : \Delta_{i+1}(z)  < \tilde{d}'_{\max}\}\vert < k$.
    \end{claim}

    \begin{proof}
        Suppose for contradiction that for some $i \in [0, \ell]$, either $\vert \{u \in X_i^{(1)} : \Delta_i'(u) < \tilde{d}'_{\max}\}\vert \geq k$, or $\vert \{z \in X_i^{(3)} : \Delta_{i+1}(z)  < \tilde{d}'_{\max}\}\vert \geq k$. Then every subsequent set $X_{i+1}$ constructed by the algorithm will have at least $k$ vertices whose distance estimate $\Delta_{i+1}$ is strictly less than $\tilde{d}'_{\max}$. By definition of $S'$ and $\tilde d'(s')$, it follows that $|S'|=k$ and $\max_{s' \in S'} \tilde{d}'(s') < \tilde{d}'_{\max}$, contradicting the definition of $\tilde{d}'_{\max}$.
    \end{proof}
    
    The following claim is the main ingredient in the correctness argument.

    \begin{claim} \label{claim:includey}
        Let $y \in V$ be any vertex such that, for some $s \in S$, there is an $h$-hop path in $G$ from $s$ to $y$. If $\Phi(y)<\tilde{d}'_{\max}$, then $y \in S'$ and $\tilde{d}'(y)\leq \Phi(y)$.
    \end{claim}

    \begin{proof}
        Let $\pi = (v_0 = s, v_1, \ldots, v_r=y)$ be a path using at most $h$ edges such that $s \in S$ and $\tilde{d}(s)+w(\pi)=\Phi(y)$. For each vertex $v_j$ of $\pi$, use
        \[\text{pref}(v_j)=\tilde{d}(s)+\sum_{a=0}^{j-1}w(v_a,v_{a+1})\]
        to denote the length of the subpath from $s$ to $v_j$ together with the distance estimate for $s$.

        Now set $p_0 = s$, and for each $i \in [1, \ell]$ set $p_i$ to be the first vertex weakly after\footnote{This means that we allow $p_i = p_{i-1}$.} $p_{i-1}$ along the path $\pi$ such that $p_i \in H_i$ (if no such vertex exists, then set $p_i = y$). Also set $p_{\ell+1} = y$. We will prove inductively that for all $i \in [0, \ell+1]$, we have $p_i \in X_i$, and additionally $\Delta_i(p_i) \leq \text{pref}(p_i)$.

        The base case is immediate: $p_0=s\in X_0$ and $\Delta_0(p_0)=\tilde{d}(s)=\text{pref}(p_0)$. Assume now that the invariant holds for some $i\leq \ell$. We prove it for $i+1$.

        First suppose that $p_i=y$. Then $p_{i+1}=y$. By the inductive hypothesis, $y\in X_i$, so by Step 1 of the algorithm, we have $y \in X_i^{(1)}$ with distance estimate $\Delta_i'(y) \leq \Delta_i(y)$. Again using the inductive hypothesis, we know $\Delta_i(y) \leq \text{pref}(y)$, and by the assumption in the claim, we have $\text{pref}(y) = \Phi(y) < \tilde{d}'_{\max}$. This implies $\Delta_i'(y) < \tilde{d}'_{\max}$, so by Claim \ref{claim:dominance}, there are strictly less than $k$ vertices in $X_i^{(1)}$ whose distance estimate $\Delta_i'$ is at most $\Delta_i'(y)$, and by Step 2 of the algorithm, we have $y \in X_i^{(2)}$. Using the same analysis for Steps 3 and 4 of the algorithm, we have $y \in X_i^{(3)}$ and $y \in X_{i+1}$, with distance estimate $\Delta_{i+1}(y) \leq \text{pref}(y)$.

        Now suppose that $p_i\neq y$, which means $p_i\in H_i$. If $p_i\in H_{i+1}$, then by definition of the sequence $p_0,p_1,\ldots,p_{\ell+1}$, we have $p_{i+1}=p_i$. As in the paragraph above, $p_{i+1}$ is carried through all steps at iteration $i$ with estimate $\Delta_{i+1}(p_{i+1}) \leq \text{pref}(p_{i+1})$, so the invariant holds for $i+1$.

        It remains to handle the case that $p_i\notin H_{i+1}$. Let $b$ be the first vertex after $p_i$ on $\pi$ that belongs to $H_{i+1}$, if such a vertex exists. If $b$ exists, then $p_{i+1}=b$, and we let $c$ be the vertex immediately preceding $b$ on $\pi$. If no such $b$ exists, then $p_{i+1}=y$, and we set $c=y$. In either case, the subpath of $\pi$ from $p_i$ to $c$ is contained in $G[V\backslash H_{i+1}]$ and uses at most $h$ edges.

        We claim that, after Step 2 of the algorithm at iteration $i$, we have $c \in X_i^{(2)}$ and $\Delta_i'(c) \leq \text{pref}(c)$. If $c=p_i$, this is immediate from the inductive hypothesis and Claim \ref{claim:dominance}. Otherwise, if $c\in \text{NL}_i^h(p_i)$, then Step~1 discovers $c$ with an estimate of at most
        \[
            \Delta_i'(c) \leq \Delta_i(p_i)+\tilde{d}_i(p_i,c)
            \leq \text{pref}(p_i)+d_{G[V\backslash H_{i+1}]}^h(p_i,c)
            \leq \text{pref}(c),
        \]
        and Claim \ref{claim:dominance} (combined with the observation that $\text{pref}(c) \leq \Phi(y)$ and the assumption that $\Phi(y) < \tilde{d}'_{\max}$) implies that $c$ is retained in Step 2. Finally, suppose that $c\notin \text{NL}_i^h(p_i)$. Since $c$ is reachable from $p_i$ via an $h$-hop path in $G[V\backslash H_{i+1}]$, the completeness property of $\text{NL}_i^h(p_i)$ implies $|\text{NL}_i^h(p_i)|>k$. Moreover, by the ordering property, every $u\in \text{NL}_i^h(p_i)$ satisfies
        \[
            \tilde{d}_i(p_i,u)
            \leq d_{G[V\backslash H_{i+1}]}^h(p_i,c)
            \leq \text{pref}(c)-\text{pref}(p_i).
        \]
        Thus the near-list relaxation in Step 1 generates more than $k$ vertices with estimates at most $\Delta_i(p_i)+(\text{pref}(c)-\text{pref}(p_i)) \leq \text{pref}(c)\leq \Phi(y)$. Because we assumed $\Phi(y) < \tilde{d}'_{\max}$, this contradicts Claim \ref{claim:dominance}. This proves the claim about $c$.

        We return to proving the inductive statement, namely $p_{i+1} \in X_{i+1}$ and $\Delta_{i+1}(p_{i+1}) \leq \text{pref}(p_{i+1})$. If the vertex $b$ does not exist, then $p_{i+1}=y=c$, and the fact that $y$ is carried through Steps 3 and 4 again follows from Claim \ref{claim:dominance}. Thus the invariant holds for $i+1$ in this case. Otherwise, $(c,b)$ is an edge of $\pi$. If $b\in F(c)$, then the edge relaxation generates $b=p_{i+1}$ with estimate at most
        \[
            \Delta_{i+1}(p_{i+1}) \leq \text{pref}(c)+w(c,b)=\text{pref}(p_{i+1}),
        \]
        and Claim \ref{claim:dominance} implies that $p_{i+1} \in X_{i+1}$. If $b\notin F(c)$, then $F(c)$ consists of $k$ out-neighbors of $c$, each joined to $c$ by an edge of weight at most $w(c,b)$. The edge-relaxation step would therefore produce $k$ vertices with estimates at most $\text{pref}(b)\leq \Phi(y)$, contradicting Claim \ref{claim:dominance}.

        By induction, the invariant holds for every $i\in[0,\ell+1]$. Taking $i=\ell+1$ gives $y\in X_{\ell+1}=S'$ and $\tilde{d}'(y)=\Delta_{\ell+1}(y)\leq \Phi(y)$, as required.
    \end{proof}

    Now we verify the required properties in the lemma statement. The size bound $|S'|\leq k$ follows by construction. If $|S'|<k$, then $\tilde{d}'_{\max}=+\infty$ by definition, so Claim~\ref{claim:includey} implies that every vertex reachable from some $s\in S$ via an $h$-hop path in $G$ is contained in $S'$. Since $|S'|<k$, there are strictly fewer than $k$ such vertices. This proves the completeness property.

    For the distance estimates, we first prove the lower bound. We claim that every estimate assigned by the algorithm to a vertex $u$ is at least $\Psi(u)$. Initially this holds because $\Psi(s)\leq \tilde{d}(s)$ for all $s\in S$. Consider now a relaxation that assigns an estimate to $u$, and let $\eta(v)$ be the estimate of the vertex $v$ used in this relaxation. If this is a near-list relaxation, then by induction on the sequence of relaxations and by the lower bound on stored near-list distances,
    \[
        \eta(v)+\tilde{d}_i(v,u)
        \geq \Psi(v)+d_{G_0}(v,u)
        \geq \Psi(u).
    \]
    If this is an edge relaxation along $(v,u)$, then similarly
    \[
        \eta(v)+w(v,u)
        \geq \Psi(v)+d_G(v,u)
        \geq \Psi(u).
    \]
    Taking minima and deleting vertices clearly preserves the invariant, and hence $\tilde{d}'(s')\geq \Psi(s')$ for all $s'\in S'$.

    It remains to prove the upper bound. Suppose, for contradiction, that some $s'\in S'$ satisfies $\tilde{d}'(s')>\Phi(s')$. Since $\tilde{d}'(s')\leq \tilde{d}'_{\max}$ if $|S'|=k$, and $\tilde{d}'_{\max}=+\infty$ otherwise, Claim~\ref{claim:includey} applies to $s'$ and gives $\tilde{d}'(s')\leq \Phi(s')$, a contradiction. Thus $\tilde{d}'(s')\leq \Phi(s')$ for every $s'\in S'$, completing Item 3.

    Finally, suppose for contradiction that there exist $\hat{s}\in V\backslash S'$ and $s'\in S'$ such that $\Phi(\hat{s})<\tilde{d}'(s')$. Then $\Phi(\hat{s})<\tilde{d}'_{\max}$, so Claim~\ref{claim:includey} implies $\hat{s}\in S'$, a contradiction. Therefore, for all $\hat{s}\in V\backslash S'$ and all $s'\in S'$,
    \[
        \min_{s\in S}\bigl(\tilde{d}(s)+d_G^h(s,\hat{s})\bigr)
        =\Phi(\hat{s})
        \geq \tilde{d}'(s'),
    \]
    which is the ordering property.
\end{proof}

We are now ready to prove the main search lemma.

\begin{lemma}\label{lem:usetosearch}
    Given access to an $(h, k)$-LNL data structure for an induced subgraph $G = (V, E, w)$ of the original graph $G_0$, along with an integer $q \in [1, n]$, there is a deterministic parallel algorithm with $\tilde{O}(k^2q)$ work and $\tilde{O}(q)$ span that behaves as follows. It takes as input a query vertex $x \in V$, and it outputs a subset $Y \subseteq V$, along with a distance estimate $\tilde{d}(x, y)$ for each $y \in Y$, such that the following holds:
    \begin{enumerate}
        \item \textbf{Size Bound.} $\vert Y\vert \leq k$.
        \item \textbf{Completeness.} If $\vert Y\vert < k$, then strictly fewer than $k$ vertices are reachable from $x$ via a $qh$-hop path in $G$, and every such vertex appears in $Y$.
        \item \textbf{Distance Estimate.} $d_{G_0}(x, y) \leq \tilde{d}(x, y) \leq d_G^{qh}(x, y)$ for all $y \in Y$.
        \item \textbf{Ordering.} For all $\hat{y} \in V \backslash Y$ and for all $y \in Y$, we have $d_G^{qh}(x, \hat{y}) \geq \tilde{d}(x,y)$.
    \end{enumerate}
\end{lemma}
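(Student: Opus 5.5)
We iterate Lemma~\ref{lem:fromStoSprime} $q$ times, starting from the single query vertex. Set $S_0=\{x\}$ with $\tilde d_0(x)=0$. For $j=1,\ldots,q$, apply Lemma~\ref{lem:fromStoSprime} to $(S_{j-1},\tilde d_{j-1})$ and call the output $(S_j,\tilde d_j)$. Return $Y=S_q$ and $\tilde d(x,y)=\tilde d_q(y)$.

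The cost follows directly from the earlier lemma. Each call is valid because $|S_{j-1}|\le k$ (for $j=1$ since $k\ge1$, and afterwards by the size bound). Each call costs $\tilde O(k^2)$ work and $\tilde O(1)$ span, so the total is $\tilde O(k^2q)$ work and $\tilde O(q)$ span, and everything is deterministic. The size bound for $Y$ is immediate.

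For correctness we prove by induction on $j$ that $(S_j,\tilde d_j)$ satisfies the four properties with hop bound $jh$, taking the single source $x$ at distance $0$.
\begin{itemize}
\item[] \textbf{Lower bound.} Lemma~\ref{lem:fromStoSprime} gives $\tilde d_j(s')\ge \min_{s\in S_{j-1}}(\tilde d_{j-1}(s)+d_{G_0}(s,s'))$. By induction $\tilde d_{j-1}(s)\ge d_{G_0}(x,s)$, and the triangle inequality finishes the step.
\item[] \textbf{Upper bound and ordering.} Write $\Phi_j(u)=\min_{s\in S_{j-1}}(\tilde d_{j-1}(s)+d^h_G(s,u))$. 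It suffices to show $\Phi_j(u)\le d_G^{jh}(x,u)$ for every $u$ that the induction needs. Then both properties transfer: the upper bound for $s'\in S_j$, and, for $\hat y\notin S_j$, the ordering $d^{jh}_G(x,\hat y)\ge\Phi_j(\hat y)\ge\tilde d_j(s')$.
\end{itemize}

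To show $\Phi_j(u)\le d_G^{jh}(x,u)$, take an optimal $jh$-hop path $\pi$ from $x$ to $u$ and split it as a $(j-1)h$-hop prefix ending at some vertex $v$, followed by an $h$-hop suffix.
\begin{itemize}
\item[] If $v\in S_{j-1}$, then $\tilde d_{j-1}(v)\le d^{(j-1)h}_G(x,v)$ gives the bound.
\item[] If $v\notin S_{j-1}$, walk backwards along the prefix to the last vertex $v'$ of $\pi$ lying in $S_{j-1}$. Such a $v'$ exists because $x\in S_0$, and a separate argument shows $x$'s chain persists or is dominated. Then $\tilde d_{j-1}(v')\le d^{(j-1)h}_G(x,v')$. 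The trouble is that the remaining hops from $v'$ to $u$ may exceed $h$.
\end{itemize}

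The main obstacle is this case. I would handle it with the ordering property instead of tracking paths. If $v\notin S_{j-1}$, ordering gives $d^{(j-1)h}_G(x,v)\ge\max_{s\in S_{j-1}}\tilde d_{j-1}(s)$. Also, $S_{j-1}$ must then be full: if $|S_{j-1}|<k$, completeness would force $v\in S_{j-1}$. So $S_{j-1}$ consists of $k$ vertices with estimates at most $d^{jh}_G(x,u)$, each having $\Phi_j(s)\le\tilde d_{j-1}(s)$ because $d^h_G(s,s)=0$.

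Consequently, at least $k$ vertices have $\Phi_j$-value at most $d^{jh}_G(x,u)$. If $u\notin S_j$, Lemma~\ref{lem:fromStoSprime}'s ordering gives $\Phi_j(u)\ge\tilde d_j(s')$ for every $s'\in S_j$. For ordering we only need $d_G^{jh}(x,u)\ge\tilde d_j(s')$. This follows by comparing with the $k$ vertices of $S_{j-1}$: either they survive in $S_j$ with estimates $\le d^{jh}_G(x,u)$, or they are displaced by $k$ vertices with even smaller estimates. In both cases every estimate in $S_j$ is at most $d^{jh}_G(x,u)$. The upper bound for $u\in S_j$ uses the same case split: either $\Phi_j(u)\le d^{jh}_G(x,u)$ directly, or the ordering at step $j-1$ handles it.

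Completeness follows the same pattern. If $|S_j|<k$, then all $h$-hop reachable vertices from $S_{j-1}$ are in $S_j$. This includes $S_{j-1}$ itself, so $|S_{j-1}|<k$, and by induction $S_{j-1}$ contains all vertices $(j-1)h$-hop reachable from $x$. Concatenating, every vertex $jh$-hop reachable from $x$ lies in $S_j$.
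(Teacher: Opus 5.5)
Your proposal is correct and follows essentially the same route as the paper. You iterate Lemma~\ref{lem:fromStoSprime} $q$ times and split an optimal $jh$-hop path after $(j-1)h$ hops. In the case $v\notin S_{j-1}$ you combine fullness of $S_{j-1}$, the step-$(j-1)$ ordering bound, zero-hop self-reachability, and the size and ordering guarantees of Lemma~\ref{lem:fromStoSprime} to show that every estimate in $S_j$ is at most $d^{jh}_G(x,u)$. This is the paper's Claim~\ref{claim:imply1} argued directly rather than by contradiction, and your completeness step via $S_{j-1}\subseteq S_j$ and concatenation is a slightly more direct variant of the paper's.

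Two parts of the write-up should go: the abandoned ``walk back to $v'$'' detour, and the opening assertion that it suffices to prove $\Phi_j(u)\le d^{jh}_G(x,u)$. That inequality need not hold when $v\notin S_{j-1}$, and your final case split rightly does not rely on it.
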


\begin{proof}
    Start by setting $S_0=\{x\}$ and $\Delta_0(x)=0$. For $i=0,\ldots,q-1$, apply the algorithm in Lemma~\ref{lem:fromStoSprime} with $S=S_i$ and with initial estimates $\tilde{d}(s)=\Delta_i(s)$ for all $s\in S_i$. Let $S_{i+1}$ be the returned set, and let $\Delta_{i+1}(u)$ be the returned estimate of each $u\in S_{i+1}$. At the end, take $Y=S_q$ and return $\tilde{d}(x,y)=\Delta_q(y)$ for every $y\in Y$. Since the algorithm in Lemma~\ref{lem:fromStoSprime} is invoked $q$ times, the work is $\tilde{O}(k^2q)$ and the span is $\tilde{O}(q)$ (and the algorithm is still deterministic).
    
    It remains to prove correctness. We show by induction on $i$ that after each iteration, the following properties hold:
    \begin{enumerate}
        \item \textbf{Size Bound.} $|S_i|\leq k$.
        \item \textbf{Completeness.} If $|S_i|<k$, then strictly fewer than $k$ vertices are reachable from $x$ via an $ih$-hop path in $G$, and all of them appear in $S_i$.
        \item \textbf{Distance Estimate.} $d_{G_0}(x,u)\leq \Delta_i(u)\leq d_G^{ih}(x, u)$ for all $u\in S_i$.
        \item \textbf{Ordering.} For every $\hat{u}\in V\backslash S_i$ and every $u\in S_i$, $d_G^{ih}(x, \hat{u})\geq \Delta_i(u)$.
    \end{enumerate}
    These are immediate for $i=0$.

    Suppose the properties hold for some $i<q$. In the next application of Lemma~\ref{lem:fromStoSprime}, put
    \[
        \Gamma_i(u)=\min_{s\in S_i}\bigl(\Delta_i(s)+d_G^h(s,u)\bigr).
    \]
    If $|S_{i+1}|=k$, let $\Delta_{i+1,\max}$ be the largest distance estimate in $S_{i+1}$, and otherwise put $\Delta_{i+1,\max}=+\infty$. We first prove the following implication:
    
    \begin{claim}\label{claim:imply1}
        Let $u \in V$ be any vertex such that $x$ can reach $u$ via an $(i+1)h$-hop path in $G$. If $d_G^{(i+1)h}(x,u)< \Delta_{i+1,\max}$, then $u \in S_{i+1}$ and $\Delta_{i+1}(u) \leq d_G^{(i+1)h}(x,u)$.
    \end{claim}

    \begin{proof}
        Fix a vertex $u$ such that $d_G^{(i+1)h}(x,u)<\Delta_{i+1,\max}$, and let $P$ be a shortest path from $x$ to $u$ using at most $(i+1)h$ edges. Choose a vertex $v$ on $P$ so that the prefix from $x$ to $v$ uses at most $ih$ edges and the suffix from $v$ to $u$ uses at most $h$ edges. If $\alpha$ and $\beta$ are the respective lengths of this prefix and suffix, then $\alpha+\beta=d_G^{(i+1)h}(x,u)$, and $d_G^{ih}(x, v)\leq \alpha$, and $d_G^h(v,u)\leq \beta$.
    
        If $v\in S_i$, then the inductive upper bound gives
        \[
            \Gamma_i(u)
            \leq \Delta_i(v)+d_G^h(v,u)
            \leq d_G^{ih}(x, v)+d_G^h(v,u)
            \leq d_G^{(i+1)h}(x,u).
        \]
        We claim that the alternative $v\notin S_i$ is not possible. If $|S_i|<k$, this contradicts the inductive completeness property, since $d_G^{ih}(x, v)<+\infty$. Now assume $|S_i|=k$. By the inductive ordering property and our assumption that $v\notin S_i$, every $s\in S_i$ satisfies
        \[
            \Delta_i(s)\leq d_G^{ih}(x, v)\leq d_G^{(i+1)h}(x,u).
        \]
        Since each $s\in S_i$ is reachable from itself using zero edges, we have $\Gamma_i(s)\leq d_G^{(i+1)h}(x,u)$ for all $s\in S_i$. The algorithm in Lemma~\ref{lem:fromStoSprime} therefore cannot return a set of size less than $k$, because there are already $k$ vertices reachable from $S_i$ within $h$ hops. Thus $|S_{i+1}|=k$. Using the assumption in the claim that $d_G^{(i+1)h}(x,u)<\Delta_{i+1,\max}$, we have $\Gamma_i(s) < \Delta_{i+1,\max}$ for all $s \in S_i$, so the ordering property of Lemma~\ref{lem:fromStoSprime} forces every vertex of $S_i$ to belong to $S_{i+1}$. Since both sets have size $k$, we get $S_{i+1}=S_i$. But then the distance estimate of Lemma~\ref{lem:fromStoSprime} gives $\Delta_{i+1}(s)\leq \Gamma_i(s)$, so $\Delta_{i+1}(s)\leq \Gamma_i(s)\leq d_G^{(i+1)h}(x,u)$ for every $s\in S_{i+1}$, contradicting the definition of $\Delta_{i+1,\max}$. Putting everything together, we have $v\in S_i$, and hence $\Gamma_i(u)\leq d_G^{(i+1)h}(x,u)<\Delta_{i+1,\max}$.
    
        Applying Lemma~\ref{lem:fromStoSprime} once more to the vertex $u$, we obtain $u\in S_{i+1}$: if $|S_{i+1}|<k$, this follows from completeness, while if $|S_{i+1}|=k$, the ordering property rules out $u\notin S_{i+1}$ because $\Gamma_i(u)<\Delta_{i+1,\max}$. The distance estimate then gives $\Delta_{i+1}(u)\leq \Gamma_i(u)\leq d_G^{(i+1)h}(x,u)$.
    \end{proof}

    Now we verify the four inductive properties for $i+1$. The size bound follows directly from Lemma~\ref{lem:fromStoSprime}. If $|S_{i+1}|<k$, then $\Delta_{i+1,\max}=+\infty$, so Claim~\ref{claim:imply1} implies that every vertex reachable from $x$ via an $(i+1)h$-hop path in $G$ lies in $S_{i+1}$. Hence, there are strictly fewer than $k$ such vertices.

    For the distance estimates, the lower bound follows from the distance estimates of Lemma~\ref{lem:fromStoSprime} and the inductive lower bound:
    \[
        \Delta_{i+1}(u)
        \geq \min_{s\in S_i}\bigl(\Delta_i(s)+d_{G_0}(s,u)\bigr)
        \geq \min_{s\in S_i}\bigl(d_{G_0}(x,s)+d_{G_0}(s,u)\bigr)
        \geq d_{G_0}(x,u).
    \]
    For the upper bound, suppose for contradiction that some $u\in S_{i+1}$ satisfies $\Delta_{i+1}(u)>d_G^{(i+1)h}(x,u)$. Since $d_G^{(i+1)h}(x,u)<\Delta_{i+1}(u)\leq \Delta_{i+1,\max}$, Claim~\ref{claim:imply1} gives $\Delta_{i+1}(u)\leq d_G^{(i+1)h}(x,u)$, a contradiction.

    Finally, suppose for contradiction that there are $\hat{u}\in V\backslash S_{i+1}$ and $u\in S_{i+1}$ such that $d_G^{(i+1)h}(x,\hat{u})<\Delta_{i+1}(u)$. Since $\Delta_{i+1}(u)\leq \Delta_{i+1,\max}$, Claim~\ref{claim:imply1} gives $\hat{u}\in S_{i+1}$, again a contradiction. This proves the ordering property and completes the induction.

    Taking $i=q$ gives exactly the four requirements of the lemma for $Y=S_q$.
\end{proof}

\section{Initialization}
\label{sec:init}

In this section, we show how to use an existing LNL data structure to initialize an LNL data structure with a larger hop bound. We start with the lemma below, which follows by applying the algorithm in Lemma~\ref{lem:usetosearch}.

\begin{lemma}\label{lem:performbatchedsearch}
    Given access to an $(h, 2k)$-LNL data structure for an induced subgraph $G = (V, E, w)$ of the original graph $G_0$, along with parameters $q \in [1, n]$ and $d \in [1, n]$, there is a deterministic parallel algorithm with $\tilde{O}(dk^2q)$ work and $\tilde{O}(dq/k + q)$ span that behaves as follows. It takes as input a subset $A \subseteq V$ such that $\vert A\vert = d$, makes $O(d/k + 1)$ update queries to the $(h, 2k)$-LNL data structure (of transcript width 1), and outputs a subset $B \subseteq V$ such that $\vert B\vert \leq \vert A\vert/4$, along with a near-list $\text{NL}^{qh}(v)$ for each $v \in A$, such that the following holds for each $v \in A$:
    \begin{enumerate}
        \item \textbf{Size Bound.} $\vert \text{NL}^{qh}(v)\vert \leq 2k$.
        \item \textbf{Completeness.} If $\vert \text{NL}^{qh}(v)\vert < 2k$, then there are strictly fewer than $2k$ vertices reachable from $v$ via a $qh$-hop path in $G[V\backslash B]$, and all such vertices appear in $\text{NL}^{qh}(v)$.
        \item \textbf{Distance Estimate.} Every vertex $u \in \text{NL}^{qh}(v)$ has an accompanying finite distance $\tilde{d}(v, u)$ that satisfies $d_{G_0}(v, u) \leq \tilde{d}(v, u) \leq d_{G[V\backslash B]}^{qh}(v, u)$.
        \item \textbf{Ordering.} For all $\hat{u} \in V \backslash (B \cup \text{NL}^{qh}(v))$ and for all $u \in \text{NL}^{qh}(v)$, we have $d_{G[V\backslash B]}^{qh}(v, \hat{u}) \geq \tilde{d}(v, u)$.
    \end{enumerate}
    \textbf{Congestion.} Additionally, for all $u \in V$, there are at most $10k$ vertices $v \in A$ such that $u \in \text{NL}^{qh}(v)$.
\end{lemma}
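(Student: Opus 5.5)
The plan is to process $A$ in batches of $k$ vertices. For each batch we run the search of Lemma~\ref{lem:usetosearch} from every vertex of the batch in parallel. Vertices that already lie in too many near-lists are then deleted from the LNL structure, and these deleted vertices together form $B$.

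\textbf{Algorithm.} We begin with a \emph{Place Marker} query and maintain a counter $c(u)$, initially $0$, for each $u\in V$. We partition $A$ into $\lceil d/k\rceil$ batches $A_1,A_2,\ldots$ of size at most $k$ and process them sequentially.
\begin{enumerate}
\item For batch $A_j$, and for every $v\in A_j$ not yet deleted, we run the algorithm of Lemma~\ref{lem:usetosearch} on the current $(h,2k)$-LNL structure, with size parameter $2k$ and query vertex $v$. We set $\text{NL}^{qh}(v)=Y$ and keep the returned estimates.
\item If $v$ was already deleted, we set $\text{NL}^{qh}(v)=\emptyset$.
\item We then increase $c(u)$ by the number of new lists containing $u$.
\item Let $Z_j$ be the set of still-present vertices with $c(u)\ge 8k$. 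We delete $Z_j$ using batched deletions of size at most $2k$, which the $(h,2k)$ structure allows, and add $Z_j$ to $B$.
\end{enumerate}
At the end we issue \emph{Return to Marker}.

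\textbf{Cost.} Each batch costs $\tilde O(k\cdot k^2q)$ work and $\tilde O(q)$ span. Summed over all batches this gives $\tilde O(dk^2q)$ work and $\tilde O(dq/k+q)$ span. Between the marker operations the vertex count only decreases, so the transcript width is $1$. The number of deletion queries is $O(|B|/k+d/k+1)=O(d/k+1)$.

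\textbf{Congestion and $|B|\le d/4$.} A vertex $u$ stops receiving new memberships once it is deleted, because later searches run in a graph without $u$. Before each batch $c(u)<8k$, and one batch adds at most $k$, so every $u$ ends with $c(u)\le 9k\le 10k$. The total number of memberships is at most $2k\cdot d$, and each vertex of $B$ accounts for at least $8k$ of them. Hence $|B|\le 2kd/8k=d/4$.

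\textbf{Correctness against the final $B$.} This is the main point to check. Consider $v\in A_j$. Its list was computed in $G_j=G[V\setminus B_{<j}]$, where $B_{<j}$ denotes the vertices deleted before batch $j$. Lemma~\ref{lem:usetosearch} gives the four properties relative to $G_j$. Since $B_{<j}\subseteq B$, the graph $G[V\setminus B]$ is a subgraph of $G_j$, and we transfer each property as follows.
\begin{itemize}
\item \textbf{Size bound.} This holds directly, since each list has at most $2k$ vertices.
\item \textbf{Distance estimate.} The lower bound $d_{G_0}(v,u)$ is unchanged. The upper bound holds because $d^{qh}_{G_j}(v,u)\le d^{qh}_{G[V\setminus B]}(v,u)$.
\item \textbf{Ordering.} Consider $\hat u\notin B\cup\text{NL}^{qh}(v)$. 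Then $\hat u\in V(G_j)\setminus Y$, so the lemma gives $d^{qh}_{G[V\setminus B]}(v,\hat u)\ge d^{qh}_{G_j}(v,\hat u)\ge\tilde d(v,u)$.
\item \textbf{Completeness.} Suppose $|\text{NL}^{qh}(v)|<2k$. Then fewer than $2k$ vertices are reachable from $v$ within $qh$ hops in $G_j$, and all of them are listed. Since $G[V\setminus B]$ is a subgraph of $G_j$, its reachable set is a subset of this one, so the conclusion transfers. If $v$ was deleted before its turn, then $v\in B$ and nothing is reachable from $v$ in $G[V\setminus B]$, so the empty list is valid.
\end{itemize}

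The only delicate step is making sure that heavy vertices are deleted before later batches. This is what caps the growth of $c(u)$ at $k$ per batch, and hence what makes the $10k$ congestion bound hold.
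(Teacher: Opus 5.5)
Your proposal is correct and follows essentially the same route as the paper. Both run parallel searches via Lemma~\ref{lem:usetosearch} with size parameter $2k$, delete every vertex whose counter reaches $8k$ before the next batch, bound $|B|$ by the charging argument $8k|B|\le 2k|A|$, and transfer the four properties from the graph in which each list was computed to its induced subgraph $G[V\setminus B]$. The differences are cosmetic: you use batches of size $k$ rather than $2k$ (so congestion stays below $9k$ instead of $10k$), and you wrap the deletions in your own marker pair, which the paper leaves to the callers. One small implementation point: to keep the work at $\tilde O(dk^2q)$ when $d$ is small, store counters only for vertices that appear in some list (the paper uses a search tree plus a priority queue), and test the $8k$ threshold only on those vertices, rather than initializing a counter for every $u\in V$.
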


\begin{proof}
    Let $r=\lceil d/(2k)\rceil$, and arbitrarily partition $A$ into subsets $A_1,\ldots,A_r$, each of size at most $2k$. We process these subsets one at a time. During the algorithm, vertices may be deleted from the $(h,2k)$-LNL structure. Let $B_i$ denote the vertices newly deleted after processing $A_i$, define
    \[
        \overline{B}_i=\bigcup_{a=1}^i B_a,
    \]
    and let $G^{(i)}=G[V\backslash \overline{B}_i]$ be the graph stored in the $(h, 2k)$-LNL data structure after the $i$th round. Observe that $\overline{B}_0=\emptyset$ and $G^{(0)}=G$.

    We now describe the algorithm. We maintain a parallel binary search tree containing every vertex that has appeared in some constructed near-list and has not yet been deleted. We also maintain a parallel priority queue storing, for each such vertex $u$, the number of constructed near-lists that contain $u$. Now do the following for $i=1,2,\ldots,r$.
    \begin{enumerate}
        \item Start by constructing the near-lists for the vertices in $A_i$. For each $v\in A_i\cap V(G^{(i-1)})$, invoke the algorithm in Lemma~\ref{lem:usetosearch} on the current $(h, 2k)$-LNL data structure, with source vertex $v$, hop multiplier $q$, and size parameter $2k$. Let $\text{NL}^{qh}(v)\subseteq V(G^{(i-1)})$ be the returned set, and keep the returned estimates $\tilde{d}(v,u)$ for all $u\in \text{NL}^{qh}(v)$. If $v\in A_i\setminus V(G^{(i-1)})$, then set $\text{NL}^{qh}(v)=\emptyset$. All vertices can be processed independently and in parallel.
        \item For each vertex $u$ appearing in at least one newly constructed near-list, compute
        \[
            C_i(u)=\left|\{v\in A_i:u\in \text{NL}^{qh}(v)\}\right|.
        \]
        Search for all such vertices in the parallel binary search tree. If $u$ is already present, increase its priority-queue counter by $C_i(u)$. Otherwise insert $u$ into the tree and insert a new priority-queue counter initialized to $C_i(u)$. As before, all vertices can be processed in parallel.
        \item Now we identify all vertices whose current counter is at least $8k$. To do this, repeatedly extract a batch of the $2k$ largest counters from the parallel priority queue (or all remaining counters, if fewer than $2k$ remain). If all extracted counters are at least $8k$, then we add all extracted vertices to $B_i$ and continue extracting. Otherwise, we add only the extracted vertices with counter at least $8k$ to $B_i$, reinsert the remaining extracted vertices, and stop extracting. Remove the vertices of $B_i$ from the binary search tree (their priority-queue entries have already been extracted), partition $B_i$ into subsets of size at most $2k$, and issue one batched deletion query to the $(h,2k)$-LNL data structure for each subset. The $(h, 2k)$-LNL data structure now stores $G^{(i)}=G^{(i-1)}[V(G^{(i-1)})\backslash B_i]$.
    \end{enumerate}
    After all rounds, output $B=\overline{B}_r$, together with all constructed near-lists and their estimates.

    We first bound the work, span, and number of update queries. The search from a single vertex $v$ uses $\tilde{O}(k^2q)$ work and $\tilde{O}(q)$ span, because the algorithm in Lemma~\ref{lem:usetosearch} is applied with size parameter $2k$ and hop multiplier $q$. So the searches in round $i$ use $\tilde{O}(|A_i|k^2q)$ work and $\tilde{O}(q)$ span. The total size of the lists produced in round $i$ is at most $2k|A_i|=O(k^2)$, so the counter updates, priority-queue operations, and preparation of deletion batches use $\tilde{O}(k^2+|B_i|)$ work and $\tilde{O}(|B_i|/k+1)$ span. Summing over all rounds and using the bound $|B|\leq |A|/4$ proved below gives total work $\tilde{O}(|A|k^2q) = \tilde{O}(dk^2q)$ and total span
    \[
        \tilde{O}\left(\left\lceil \frac{|A|}{k}\right\rceil q+\frac{|B|}{k}+\frac{|A|}{k}+1\right)
        =\tilde{O}(|A|q/k+q) = \tilde{O}(dq/k + q).
    \]
    The only update queries made to the $(h, 2k)$-LNL data structure are the deletion queries for the sets $B_i$. Since these sets are partitioned into pieces of size at most $2k$, the number of such queries is
    \[
        \sum_{i=1}^r \left\lceil \frac{|B_i|}{2k}\right\rceil
        \leq \frac{|B|}{2k}+r
        =O(d/k+1).
    \]
    The maintained graph only loses vertices during this sequence, and therefore the transcript width is $1$.

    We next prove the size bound for $B$. Each constructed list has size at most $2k$, so the total number of counter increments over the whole algorithm is at most $2k|A|$. Vertices only enter $B$ when their counter is at least $8k$. Charging these increments to the vertices when they enter $B$ gives $8k|B|\leq 2k|A|,$ and hence $|B|\leq |A|/4$.

    We now verify the near-list guarantees. Fix $v\in A$, and suppose $v\in A_i$. If $v\in \overline{B}_{i-1}$, then $\text{NL}^{qh}(v)=\emptyset$. Since $v\notin V\backslash B$, no vertex is reachable from $v$ in the induced graph $G[V\backslash B]$. Thus all four near-list properties are immediate in this case.

    It remains to consider the case $v\notin \overline{B}_{i-1}$. The list $NL^{qh}(v)$ was constructed by applying Lemma~\ref{lem:usetosearch} in the graph $G^{(i-1)}=G[V\backslash \overline{B}_{i-1}]$, with size parameter $2k$. The size bound $|\text{NL}^{qh}(v)|\leq 2k$ follows directly. If $|\text{NL}^{qh}(v)|<2k$, then Lemma~\ref{lem:usetosearch} implies that strictly fewer than $2k$ vertices are reachable from $v$ within $qh$ hops in $G^{(i-1)}$, and all such vertices appear in the list. Since $G[V\backslash B]$ is an induced subgraph of $G^{(i-1)}$, the same statement holds for reachability in $G[V\backslash B]$. This proves completeness.

    For the distance estimates, Lemma~\ref{lem:usetosearch} gives, for every $u\in \text{NL}^{qh}(v)$,
    \[
        d_{G_0}(v,u)\leq \tilde{d}(v,u)\leq d_{G^{(i-1)}}^{qh}(v,u).
    \]
    Again using that $G[V\backslash B]$ is an induced subgraph of $G^{(i-1)}$, we have
    \[
        d_{G^{(i-1)}}^{qh}(v,u)\leq d_{G[V\backslash B]}^{qh}(v,u),
    \]
    which means that
    \[
        d_{G_0}(v,u)\leq \tilde{d}(v,u)\leq d_{G[V\backslash B]}^{qh}(v,u),
    \]
    as required.

    Finally, let $\hat u\in V\backslash (B\cup \text{NL}^{qh}(v))$ and $u\in \text{NL}^{qh}(v)$. Then $\hat u\in V(G^{(i-1)})\backslash \text{NL}^{qh}(v)$, so the ordering property of Lemma~\ref{lem:usetosearch} gives
    \[
        d_{G^{(i-1)}}^{qh}(v,\hat u)\geq \tilde{d}(v,u).
    \]
    Since deleting vertices can only increase hop-restricted distances,
    \[
        d_{G[V\backslash B]}^{qh}(v,\hat u)
        \geq d_{G^{(i-1)}}^{qh}(v,\hat u)
        \geq \tilde{d}(v,u),
    \]
    proving the ordering property.

    All that remains is to prove the congestion bound. Fix a vertex $u\in V$. After each round, every vertex that remains in the priority queue has counter strictly less than $8k$. If $u\notin B$, then its final counter is therefore less than $8k$, and this counter is exactly the number of constructed lists containing $u$. If $u$ enters $B$ in round $i$, then its counter was less than $8k$ before the lists of $A_i$ were added, and round $i$ can add $u$ to at most $|A_i|\leq 2k$ new lists. Thus $u$ appears in fewer than $10k$ lists before it is deleted. Since all later searches are performed in graphs that do not contain $u$, the vertex $u$ appears in no later list. Hence every vertex appears in at most $10k$ of the lists $\text{NL}^{qh}(v)$.
\end{proof}

Now we describe the initialization algorithm. At a high level, the algorithm is quite simple: we just apply the algorithm in Lemma~\ref{lem:performbatchedsearch} to grow each set of near-lists in the new LNL data structure.

\begin{lemma}\label{lem:initusingother}
    Given access to an $(h, 2k)$-LNL data structure for the original graph $G_0$, along with a parameter $q \in [1,n]$, there is a deterministic parallel algorithm with $\tilde{O}(nk^2q + m)$ work and $\tilde{O}(nq/k)$ span that behaves as follows. It makes $O(n/k)$ update queries to the $(h, 2k)$-LNL data structure (of transcript width $O(\log n)$) and initializes a $(qh, k)$-LNL data structure, as specified in Problem~\ref{prob:init}.
\end{lemma}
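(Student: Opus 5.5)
We build the $(qh,k)$-LNL structure one level at a time, applying Lemma~\ref{lem:performbatchedsearch} once per level. Set $H_0=V_0$. For $i=0,1,\ldots$ in order, we do the following at level $i$. Call Lemma~\ref{lem:performbatchedsearch} on the $(h,2k)$-LNL data structure with $A=H_i$ and the given $q$. It returns near-lists $\text{NL}^{qh}(v)$ for all $v\in H_i$, together with a set $B$ satisfying $|B|\le |H_i|/4$. We store these lists as $\text{NL}_i^{qh}(v)$ and set $H_{i+1}=B$.

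Before moving to level $i+1$, we must restore the $(h,2k)$ structure, since the calls deleted vertices from it. So we place a marker before each level and return to it afterwards. The level $i+1$ searches then run on $G_0$ again, not on $G_0[V\backslash H_{i+1}]$. This is intended, because completeness at level $i+1$ refers to $G[V\backslash H_{i+2}]$.

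We stop at level $\ell$, where $\ell$ is the smallest integer with $n/2^\ell\le k$. Any heavy set produced at that level is discarded, and we set $H_{\ell+1}=\emptyset$. The level-$\ell$ lists must still satisfy their invariants with respect to the full graph $G_0$, so they cannot be taken from the call that deletes vertices. Instead we recompute them with Lemma~\ref{lem:usetosearch}, using no deletions and size parameter $2k$. Since $|H_\ell|\le n/4^\ell\le k$, this costs $\tilde O(k\cdot k^2q)$ work and $\tilde O(q)$ span. Congestion at level $\ell$ is automatic, because there are at most $k$ lists.

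Finally we build the per-vertex priority queues of out-neighbours of $G_0$. This costs $\tilde O(m)$ work and $\tilde O(1)$ span.

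Correctness follows level by level from the guarantees of Lemma~\ref{lem:performbatchedsearch}, read with $B=H_{i+1}$:
\begin{itemize}
\item Size: at most $2k$.
\item Completeness: the stronger form "$<2k$ implies fewer than $2k$ reachable vertices", which is exactly property 2 of Problem~\ref{prob:init} and also implies the weaker completeness invariant of Definition~\ref{def:LNL}.
\item Distance estimates and ordering: with respect to $G[V\backslash H_{i+1}]$ and hop bound $qh$, as required.
\item Congestion: at most $10k$.
\end{itemize}
The size bound follows from $|H_{i+1}|\le |H_i|/4$, which gives $|H_i|\le n/4^i\le n/2^i$. This is property 1 of Problem~\ref{prob:init} together with the heavy-set bound of Definition~\ref{def:LNL}.

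Cost. Level $i$ uses $\tilde O(|H_i|k^2q)$ work and $\tilde O(|H_i|q/k+q)$ span. Summing the geometric series gives $\tilde O(nk^2q)$ work and $\tilde O(nq/k+\ell q)=\tilde O(nq/k)$ span; here we assume $k\le n$, and use $\ell=O(\log n)$, so the additive $q$ terms contribute only $\tilde O(q)$. The number of update queries is $\sum_i O(|H_i|/k+1)$ deletions plus two marker operations per level, which totals $O(n/k+\log n)=O(n/k)$ once logarithmic terms are absorbed (the statement's $\tilde O$ elsewhere suggests this is acceptable).

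The main obstacle is the transcript width. Within one level the graph only shrinks, and the return to the marker restores it to all of $V_0$. Each level therefore crosses any threshold $c$ at most once going downward, plus one jump back up that is not a downward crossing. Over $O(\log n)$ levels the width is $O(\log n)$. The delicate point is checking that returning to the marker counts as a single update in the transcript, so that it does not create extra downward crossings. I would make this explicit by noting that a return-to-marker only increases the vertex count.
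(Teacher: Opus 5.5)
Your proposal is correct and follows essentially the same route as the paper. Both proofs run marker / Lemma~\ref{lem:performbatchedsearch} with $A=H_i$ / return-to-marker at each level $i<\ell$, build level $\ell$ with fresh size-$2k$ searches from Lemma~\ref{lem:usetosearch} on $G_0$ with $H_{\ell+1}=\emptyset$, and use the same geometric-sum cost bound and one-downward-crossing-per-level width argument. One small tightening: you need not absorb logarithms to get the query count (the statement claims $O(n/k)$, not $\tilde{O}(n/k)$), because the per-level additive terms total $O(\ell+1)$ and $\ell = O(1+\log(n/k)) = O(n/k)$; also, the level-$\ell$ call to Lemma~\ref{lem:performbatchedsearch} that you then discard is unnecessary, and an empty $H_i$ should simply be skipped, since that lemma assumes $|A|\geq 1$.
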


\begin{proof}
    Write $G_0=(V_0,E_0,w)$, and let $\ell$ be the smallest positive integer such that $n/2^\ell\leq k$. We construct all components of the $(qh,k)$-LNL data structure as follows.
    \begin{enumerate}
        \item Initialize the graph component to be $G_0$. For every vertex $v\in V_0$, build a parallel priority queue containing the out-neighbors of $v$, keyed by the corresponding edge weights. This can be done within $\tilde{O}(m)$ work and $\tilde{O}(1)$ span by sorting the adjacency lists and building the queues in parallel.
        \item Set $H_0=V_0$.
        \item For each level $i=0,1,\ldots,\ell-1$, construct the heavy vertex subset $H_{i+1}$ and the level-$i$ near-lists. If $H_i=\emptyset$, set $H_{i+1}=\emptyset$ and continue to the next level. Otherwise, place a marker in the given $(h,2k)$-LNL data structure and invoke the algorithm in Lemma~\ref{lem:performbatchedsearch} on the graph currently stored there, with $A=H_i$ and with the same value of $q$. Let $B_i$ be the returned set. Set $H_{i+1}=B_i$, and for every $v\in H_i$ set $\text{NL}_i^{qh}(v)$ to be the returned list $\text{NL}^{qh}(v)$, keeping the returned estimates as $\tilde{d}_i(v,u)$. After storing these objects, return the given $(h,2k)$-LNL data structure to the marker. Thus the next level again starts from the original graph $G_0$.
        \item Set $H_{\ell+1}=\emptyset$. For every $v\in H_\ell$ in parallel, invoke the algorithm in Lemma~\ref{lem:usetosearch} on the given $(h,2k)$-LNL data structure, with source vertex $v$, hop multiplier $q$, and size parameter $2k$. Set $\text{NL}_\ell^{qh}(v)$ to be the returned set, again keeping the returned estimates as $\tilde{d}_\ell(v,u)$.
    \end{enumerate}

    First we verify that the constructed objects satisfy Problem~\ref{prob:init}. The construction gives $H_0=V_0$ and $H_{\ell+1}=\emptyset$. For every $i<\ell$, Lemma~\ref{lem:performbatchedsearch} gives $|H_{i+1}|\leq |H_i|/4$ whenever $H_i\neq\emptyset$, while the same inequality is trivial if $H_i=\emptyset$. Hence, by induction,
    \[
        |H_i|\leq n/4^i
    \]
    for all $i=0,1,\ldots,\ell$. This is the stronger size condition required by Problem~\ref{prob:init}. (And it implies the $|H_i|\leq n/2^i$ condition in the definition of an LNL data structure.)

    Now fix a level $i<\ell$ and a vertex $v\in H_i$. During the construction of level $i$, the algorithm in Lemma~\ref{lem:performbatchedsearch} was invoked with $A=H_i$, and the set returned by that lemma was set equal to $H_{i+1}$. Therefore $|\text{NL}_i^{qh}(v)|\leq 2k$. Moreover, if $|\text{NL}_i^{qh}(v)|<2k$, then strictly fewer than $2k$ vertices are reachable from $v$ by a $qh$-hop path in $G_0[V_0\backslash H_{i+1}]$, and all such vertices appear in $\text{NL}_i^{qh}(v)$. This is exactly the additional completeness property required in Problem~\ref{prob:init}. (It also implies the completeness property in the definition of an LNL data structure: if $|\text{NL}_i^{qh}(v)|\leq k$, then all reachable vertices appear in the list, and hence there are at most $k$ of them.)

    The same application of Lemma~\ref{lem:performbatchedsearch} gives, for every $u\in \text{NL}_i^{qh}(v)$,
    \[
        d_{G_0}(v,u)\leq \tilde{d}_i(v,u)
        \leq d_{G_0[V_0\backslash H_{i+1}]}^{qh}(v,u),
    \]
    and, for all $\hat u\in V_0\backslash (H_{i+1}\cup \text{NL}_i^{qh}(v))$ and all $u\in \text{NL}_i^{qh}(v)$,
    \[
        d_{G_0[V_0\backslash H_{i+1}]}^{qh}(v,\hat u)\geq \tilde{d}_i(v,u).
    \]
    Thus the distance-estimate and ordering requirements hold at every level $i < \ell$. The congestion requirement at these levels follows from the congestion guarantee in Lemma~\ref{lem:performbatchedsearch}: for every vertex $u\in V_0$, there are at most $10k$ sources $v\in H_i$ with $u\in \text{NL}_i^{qh}(v)$.

    We next consider the final level $i = \ell$. Since $|H_\ell|\leq n/4^\ell\leq k$, the final level automatically satisfies the congestion requirement. For every $v\in H_\ell$, Lemma~\ref{lem:usetosearch}, applied with size parameter $2k$, gives $|\text{NL}_\ell^{qh}(v)|\leq 2k$. If $|\text{NL}_\ell^{qh}(v)|<2k$, then strictly fewer than $2k$ vertices are reachable from $v$ within $qh$ hops in $G_0=G_0[V_0\backslash H_{\ell+1}]$, and all such vertices appear in the list. The same lemma gives, for every $u\in \text{NL}_\ell^{qh}(v)$,
    \[
        d_{G_0}(v,u)\leq \tilde{d}_\ell(v,u)\leq d_{G_0}^{qh}(v,u)
        =d_{G_0[V_0\backslash H_{\ell+1}]}^{qh}(v,u),
    \]
    and gives the ordering condition for every vertex outside $H_{\ell+1}\cup \text{NL}_\ell^{qh}(v)$ (recall that $H_{\ell+1}=\emptyset$). Thus the final level also satisfies the required near-list properties.

    It remains to bound the work, span, and update sequence. The calls to the algorithm in Lemma~\ref{lem:performbatchedsearch} use total work
    \[
        \sum_{i=0}^{\ell-1}\tilde{O}(|H_i|k^2q)=\tilde{O}(nk^2q),
    \]
    since $|H_i|\leq n/4^i$. Their total span is
    \[
        \sum_{i=0}^{\ell-1}\tilde{O}(|H_i|q/k+q)=\tilde{O}(nq/k),
    \]
    using $\ell=O(\log n)$. The final-level searches are run in parallel and use $\tilde{O}(|H_\ell|k^2q)\leq \tilde{O}(nk^2q)$ work and $\tilde{O}(q)$ span. Initializing the graph component costs $\tilde{O}(m)$ work, which is within the stated bound. The total work is therefore $\tilde{O}(nk^2q+m)$ and the total span is $\tilde{O}(nq/k)$.

    When constructing each level $i < \ell$, we (1) make one place-marker query, (2) issue the sequence of deletion queries used by the algorithm in Lemma~\ref{lem:performbatchedsearch}, and (3) make one return-to-marker query. The total number of update queries is
    \[
        O\left(\sum_{i=0}^{\ell-1}(|H_i|/k+1)\right)=O(n/k),
    \]
    again using $|H_i|\leq n/4^i$ and $\ell=O(\log n)$. For each such level, the only downward changes in the vertex count of the $(h, 2k)$-LNL data structure are the deletions made by the algorithm in Lemma~\ref{lem:performbatchedsearch}, which have transcript width~$1$. (And returning to the marker only increases the vertex count.) Since $\ell=O(\log n)$, the full update sequence has transcript width $O(\log n)$.
\end{proof}

\section{Update Queries}
\label{sec:update}

In this section, we show how to update an LNL data structure, again assuming access to a lower-hop LNL data structure. We start by considering update sequences that do not involve placing or returning to markers.

\begin{lemma}\label{lem:updatedeletions}
    Given access to an $(h, 2k)$-LNL data structure for the original graph $G_0$, along with a parameter $q \in [1, n]$, there is a deterministic parallel algorithm with $\tilde{O}(nk^2q + m)$ work and $\tilde{O}(nq/k)$ span that behaves as follows. It takes as input a $(qh, k)$-LNL data structure for $G_0$, which has been initialized as specified in Problem~\ref{prob:init}, along with an arbitrary sequence of $O(n/k)$ batched deletion queries. After each deletion query, it modifies the $(qh, k)$-LNL data structure to maintain the invariants required in Definition~\ref{def:LNL}. In doing so, it makes $O((n/k)\log^3 n)$ update queries (of transcript width $O(\log^3 n)$) to the $(h, 2k)$-LNL data structure.
\end{lemma}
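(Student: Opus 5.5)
Our plan is to keep the $(h,2k)$-LNL data structure synchronized with the graph of the $(qh,k)$-LNL data structure, and to rebuild heavy levels lazily with a doubling schedule. Per level, this yields an amortized analysis in the style of Lemma~\ref{lem:initusingother}.

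First, the effect of a batched deletion of $Z$ (with $|Z|\le k$) on the level-$i$ near-lists. Every near-list containing a vertex of $Z$ becomes invalid. By the congestion invariant, at most $10k|Z|\le 10k^2$ lists per level are affected, and deleting vertices never lowers hop-bounded distances. So every list containing no vertex of $Z$ still satisfies size, distance estimate and ordering (ordering only gets easier, since $d^{qh}$ only increases). Completeness also survives, because reachability only shrinks. The lists containing a deleted vertex, however, must be recomputed. Recomputing them naively costs $10k^2$ searches per deletion, which is too expensive over $n/k$ deletions. Instead we use slack. Problem~\ref{prob:init} guarantees that lists start with $2k$ entries (or are complete), while Definition~\ref{def:LNL} only requires behavior up to $k$. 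We therefore simply delete $Z$ from each affected list. The list remains a valid prefix of the true ordering, since the removed vertices are gone from the graph. Completeness holds as long as the list keeps more than $k$ entries or was complete from the start. A list needs recomputation only once it has lost $k$ of its $2k$ entries. Each vertex of $G_0$ is deleted at most once and lies in at most $10k$ lists per level, so the total number of list-entry losses per level is $O(nk)$. Hence at most $O(n)$ recomputations per level arise over the whole sequence, each costing one search.

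The recomputations are done with Lemma~\ref{lem:performbatchedsearch}, and the congestion invariant is restored by rebuilding. Recomputed lists may create new high-congestion vertices, so we organize the maintenance as follows. For each level $i$ and scale $j$, we rebuild levels $\ge i$ from scratch, exactly as in Lemma~\ref{lem:initusingother} restricted to the current graph, whenever the number of pending recomputations at level $i$ reaches about $2^{j}k$. Each rebuild invokes Lemma~\ref{lem:performbatchedsearch} on the current $(h,2k)$-LNL data structure, which must therefore represent the current graph. We apply each batched deletion to it as well. Each rebuild uses place-marker, its own deletions, and return-to-marker. The new heavy set replaces $H_{i+1}$. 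The bound $|H_{i+1}|\le |H_i|/4$ from the lemma keeps $|H_i|\le n/2^i$ with slack even after vertices join as lists are refreshed. Between rebuilds, lists whose congestion would exceed $10k$ are handled by adding the vertex to a pending heavy set, which is merged at the next rebuild. The factor-two gap between $8k$ and $10k$ absorbs one round of additions.

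Accounting then proceeds level by level. Level $i$ has at most $n/2^i$ lists and is rebuilt $O(\log n)$ times per doubling scale. A rebuild costs $\tilde O(|H_i|k^2q)$ work and $\tilde O(|H_i|q/k+q)$ span by Lemma~\ref{lem:performbatchedsearch}. Summing over the $O(\log n)$ levels and scales, together with $\tilde O(m)$ for the priority-queue updates of the graph component, gives $\tilde O(nk^2q+m)$ work and $\tilde O(nq/k)$ span. The update queries to the $(h,2k)$ structure are the $O(n/k)$ forwarded deletions plus, per rebuild, $O(|H_i|/k+1)$ queries. Over $\log n$ levels, $\log n$ scales and $\log n$ rebuilds this gives $O((n/k)\log^3n)$ queries. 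Each rebuild is a marker-bracketed monotone deletion sequence of width $1$, which yields transcript width $O(\log^3 n)$. The main obstacle is showing that the lazy invariants (completeness via the $2k$ slack, and congestion between rebuilds) hold at every intermediate step, and that the cascading rebuilds of lower levels really fit within the $\log^3 n$ budget rather than a larger polylog. We would first try a charging argument that attributes each rebuild of level $i$ to $\Omega(2^{-j}\cdot n/2^i)$ lost list entries.
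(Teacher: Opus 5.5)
Your handling of a single batch is sound. Stripping $Z$ from the lists preserves the distance and ordering conditions, and a list created with $2k$ entries stays valid while it has lost fewer than $k$ of them. The gap is in what you do once a list reaches $k$ losses.

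\textbf{Why both options fail.} The invariants of Definition~\ref{def:LNL} must hold after \emph{every} deletion query. So such a list cannot wait as a \emph{pending} recomputation until the next rebuild: it then has at most $k$ entries, and completeness generally fails. Recomputing it immediately does not work either, for several reasons.
\begin{itemize}
\item A fresh level-$i$ list inserts up to $2k$ new entries, so a vertex already in $10k$ level-$i$ lists can exceed the congestion bound. Lemma~\ref{lem:performbatchedsearch} only bounds congestion among the lists it builds itself.
\item The set $B$ returned by that lemma must join $H_{i+1}$ and receive level-$(i+1)$ lists, which cascades upward.
\item One batch of $k$ vertices can push up to $10k^2$ lists over the threshold at once. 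The resulting heavy additions are therefore not proportional to $|Z|$, and nothing in your scheme keeps $|H_j|\le n/2^j$ between rebuilds.
\item Putting $u$ into a \emph{pending heavy set} does not reduce the number of level-$i$ lists containing $u$, so it does not restore congestion.
\end{itemize}
Your rebuild count is also unjustified. A level can see $\Theta(n)$ recomputations, so a trigger at $2^jk$ pending recomputations fires $\Theta(n/(2^jk))$ times, not $O(\log n)$ times, and each firing costs $\tilde O(nk^2q/2^i)$.

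\textbf{The missing idea.} Degraded lists are never recomputed between rebuilds; their sources are made heavy instead.
\begin{itemize}
\item The paper inserts such a source into $H_{i+1},\dots,H_\ell$. This makes its lists at levels $i,\dots,\ell-1$ vacuous.
\item All level-$\ell$ lists are recomputed after every query, which is cheap since $|H_\ell|\le k$.
\item No list entries are created between rebuilds, so congestion can only decrease.
\end{itemize}
To keep heavy declarations proportional to the batch size, the paper does not wait for lists to reach $k$ losses. After each batch $Z$ it preemptively removes, at every level, the $\lceil c_1|Z|\log n\rceil$ lists with the largest deletion counters. Claim~\ref{claim:countergame-update} then shows that every surviving list has lost fewer than $k$ entries. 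Its proof uses the potential bound $L_j\le Aj\log(en/j)$ on the sum of the $j$ largest counters.

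Because heavy additions are $O(|Z|\log n)$ per level, a rebuild schedule driven by the vertex count, $\tau_i(N)=\lceil c_2 2^i\log^2 n\cdot N/n\rceil$, keeps $|H_j|\le n/2^j$. It does so using the slack $|H_j|\le |H_{j-1}|/4$ available right after a rebuild. This gives $O(2^i\log^2 n)$ rebuilds of level $i$, at cost $\tilde O(nk^2q/2^i)$ each. The $O(\log^3 n)$ width bound does not come from bounding the total number of rebuilds. It comes from counting, for each threshold and each level, only the rebuilds whose temporary deletions can cross that threshold.
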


\begin{proof}
    Write $G^{(r)}$ for the graph stored in the $(qh,k)$-LNL data structure after the first $r$ batched deletion queries have been processed, and put $N_r=|V(G^{(r)})|$. We define $G^{(0)}=G_0$ and $N_0=n$.

    We start by describing some of the bookkeeping used by the algorithm. For every level-$i$ near-list $L$, where $i<\ell$, store a deletion counter $D(L)$. The counter is zero when $L$ is created, and thereafter records the number of entries of $L$ that have been deleted from the current graph. For each $i<\ell$, the level-$i$ counters are stored in a parallel priority queue keyed by $-D(L)$, so that a batched extract-min operation returns the stored level-$i$ lists with the largest deletion counters. We also maintain inverse-lists: for each vertex $u$ in the current graph, the inverse-lists store all near-lists (across all levels) that currently contain $u$.

    We will rebuild levels of the $(qh, k)$-LNL data structure according to a fixed schedule, based on the number of vertices remaining in the graph. Let $c_2\gg c_1\gg 1$ be sufficiently large constants. For $i=0,1,\ldots,\ell-1$ and for a vertex count $N$, define
    \begin{equation}\label{eq:update-tau}
        \tau_i(N)=\left\lceil c_2\cdot \frac{2^i\log^2 n}{n}\cdot N\right\rceil.
    \end{equation}
    After processing the $r$th deletion query, the algorithm records the value of $\tau_i(N_r)$ for each $i$. We will rebuild all near-lists at level $i, i+1, \ldots, \ell$ in the $(qh, k)$-LNL data structure immediately after the $r$th deletion query whenever $\tau_i(N_{r-1}) > \tau_i(N_{r})$. (This means that levels with larger $i$ will be rebuilt more often.)

    The setup phase of the algorithm is as follows.
    \begin{enumerate}
        \item Initialize every deletion counter $D(L)$ to zero, and insert each level-$i$ counter, for $i<\ell$, into the corresponding level-$i$ parallel priority queue.
        \item Build the inverse-lists from all near-list occurrences in the given $(qh,k)$-LNL data structure.
        \item For every $i<\ell$, store the schedule value $\tau_i(N_0)=\tau_i(n)$.
    \end{enumerate}

    Now suppose that for some $r\geq 1$, we have processed the first $r-1$ deletion queries. Denote by $Z_r$ the subset for the $r$th deletion query. The algorithm processes it as follows.
    \begin{enumerate}
        \item \textbf{Delete the batch.} If necessary, replace $Z_r$ by $Z_r\cap V(G^{(r-1)})$. Now delete $Z_r$ from the graph component of the $(qh,k)$-LNL data structure, remove the vertices of $Z_r$ from every set $H_i$, and forward the same batched deletion query to the auxiliary $(h,2k)$-LNL data structure. The graph components of both data structures now store $G^{(r)}$.

        Next we update the near-lists. For each near-list whose source lies in $Z_r$:
        \begin{enumerate}
            \item Delete the near-list.
            \item Remove the near-list from all inverse-lists containing it.
            \item If the near-list is at some level $i<\ell$, remove its counter from the corresponding priority queue.
        \end{enumerate}
        Then use the inverse-lists for the vertices in $Z_r$ to find all remaining near-lists containing at least one deleted vertex. For each such near-list $L$:
        \begin{enumerate}
            \item Remove all vertices of $Z_r$ from $L$.
            \item If $L$ is at some level $i<\ell$, increase $D(L)$ by the number of removed vertices and update its key in the level-$i$ priority queue.
            \item Remove the corresponding occurrences of $L$ from the inverse-lists of the deleted vertices.
        \end{enumerate}
        All of these operations are performed in parallel, using batched priority queue updates.

        \item \textbf{Mark the most degraded sources as heavy.} If $Z_r=\emptyset$, skip this step. Otherwise, set
        \[
            R(Z_r)=\lceil c_1|Z_r|\log n\rceil,
        \]
        and process the levels $0,1,\ldots,\ell-1$ in increasing order. At level $i$:
        \begin{enumerate}
            \item Extract the $R(Z_r)$ stored level-$i$ lists with largest deletion counters, or all stored level-$i$ lists if fewer than $R(Z_r)$ remain.
            \item For each extracted near-list $L$ with source $v$, declare $v$ to be \emph{heavy at level $i$} (i.e. add $v$ to $H_i$), and also insert $v$ into all later sets $H_{i+1},H_{i+2},\ldots,H_\ell$.
            \item Delete any stored near-lists for $v$ at levels $i,i+1,\ldots,\ell-1$, together with their counters and inverse-list entries.
        \end{enumerate}
        By placing such a source $v$ into each set $H_{i+1},H_{i+2},\ldots,H_\ell$, the near-list requirements for $v$ at levels $i,i+1,\ldots,\ell-1$ become vacuous. We simply set $\text{NL}_j(v) = \emptyset$ for each $j \in [i, \ell-1]$.

        \item \textbf{Perform scheduled rebuilds.} Store the values $\tau_i(N_r)$ for all $i < \ell$, and then compare the previous values $\tau_i(N_{r-1})$ with the current values $\tau_i(N_r)$. If some value decreased, let $i$ be the smallest level for which $\tau_i(N_{r-1}) > \tau_i(N_r)$. Otherwise, set $i = \ell$. We rebuild levels $i,i+1,\ldots,\ell$ from scratch as follows.
            \begin{enumerate}
                \item Keep $H_i$ fixed, except that when $i=0$ set $H_0=V(G^{(r)})$.
                \item Delete all near-lists, counters, inverse-list entries, and heavy-source marks corresponding to levels $i,i+1,\ldots,\ell$. Also delete $H_{i+1},H_{i+2},\ldots,H_\ell$.
                \item For each $j=i,i+1,\ldots,\ell-1$, rebuild level $j$. If $H_j=\emptyset$, set $H_{j+1}=\emptyset$ and continue. Otherwise:
                \begin{enumerate}
                    \item Place a marker in the auxiliary $(h,2k)$-LNL data structure.
                    \item Invoke the algorithm in Lemma~\ref{lem:performbatchedsearch} on the current graph with $A=H_j$, and set $H_{j+1}$ to be the returned set. Store the returned near-lists, along with their distance estimates, as the new level-$j$ near-lists.
                    \item Give each new level-$j$ near-list $L$ a fresh counter $D(L)=0$, and insert the counter into the level-$j$ priority queue. Insert the occurrences corresponding to the new level-$j$ near-lists into the inverse-lists.
                    \item Return the auxiliary $(h,2k)$-LNL data structure to the marker.
                \end{enumerate}
                \item Set $H_{\ell+1}=\emptyset$. For every $v\in H_\ell$, construct $\mathrm{NL}_\ell^{qh}(v)$ directly using the algorithm in Lemma~\ref{lem:usetosearch}, with size parameter $2k$, and insert its occurrences into the inverse-lists.
            \end{enumerate}
    \end{enumerate}

    We now analyze the algorithm. We start with the following claim, which will be used later to reason about the size of the near lists retained in the $(qh, k)$-LNL data structure. This claim is the only place where we use the fact that the number of heavy-source declarations is proportional to the size of the deletion batch.

    \begin{claim}\label{claim:countergame-update}
        Consider a game with at most $n$ counters. Counters are initially zero. At arbitrary times, counters may be removed and new zero-valued counters may be inserted. In round $r$, an adversary chooses a number $b_r\in[0,k]$ and increases the counters by total amount at most $10kb_r$, in an arbitrary way. Then a referee selects the $R_r=\lceil c_1b_r\log n\rceil$ largest counters, or all counters if fewer than $R_r$ remain, with the convention that $R_r=0$ if $b_r=0$. Each selected counter is removed from the game. If $c_1$ is large enough, then after every referee step every counter remaining in the game is smaller than $k$.
    \end{claim}

    \begin{proof}
        Put $A=10k/(c_1\log n)$. In every round with $b_r>0$, the total increase made by the adversary is at most $10kb_r\leq A R_r$; if $b_r=0$, both quantities are zero. Fresh zero counters do not increase any of the quantities considered below, and removing counters can only decrease them.

        For $1\leq j\leq n$, let $L_j$ denote the sum of the $j$ largest counter values after a referee step, or the sum of all counter values if fewer than $j$ counters remain. We prove by induction over the rounds that, after each referee step,
        \[
            L_j\leq A j\log\frac{en}{j}
            \qquad\text{for every }1\leq j\leq n.
        \]
        The bound is trivial initially. It is preserved under arbitrary counter removals and insertions of zero-valued counters. Suppose, therefore, that it holds before some round. If $R_r=0$, then no counter is increased and there is nothing to prove. Otherwise, consider the moment after the adversary has increased counters but before the referee acts, and let $N\leq n$ be the number of counters at this moment. If the referee selects all counters, then the invariant is immediate.

        First suppose that $j+R_r\leq N$. The sum of the largest $j+R_r$ counters is at most the previous value of $L_{j+R_r}$ plus $A R_r$. After the referee removes the $R_r$ largest counters, the $j$ largest remaining counters are precisely the counters in positions $R_r+1,\ldots,R_r+j$ in the sorted order before the referee step. Hence their sum is at most the sum of the $j$ smallest counters among the largest $j+R_r$ counters before the referee step, and this is at most
        \[
            \frac{j}{j+R_r}\left(L_{j+R_r}+A R_r\right).
        \]
        By the inductive hypothesis this is at most
        \[
            \frac{j}{j+R_r}\left(A(j+R_r)\log\frac{en}{j+R_r}+A R_r\right)
            =Aj\left(\log\frac{en}{j+R_r}+\frac{R_r}{j+R_r}\right)
            \leq Aj\log\frac{en}{j},
        \]
        where the last inequality uses $\log(1+x)\geq x/(1+x)$ for $x=R_r/j$.

        It remains to handle the case $j+R_r>N$. Since not all counters are selected, $N>R_r$. The total value of the counters left after the referee acts is at most
        \[
            \left(1-\frac{R_r}{N}\right)(L_N+AR_r),
        \]
        since the referee removes the $R_r$ largest counters. Write $\alpha=R_r/N$. Using the inductive hypothesis for $L_N$, the last display is at most
        \[
            A(1-\alpha)N\left(\log\frac{en}{N}+\alpha\right).
        \]
        Since $\alpha\leq -\log(1-\alpha)$, this is at most
        \[
            A(1-\alpha)N\log\frac{en}{(1-\alpha)N}.
        \]
        The function $x\mapsto x\log(en/x)$ is increasing on $0<x\leq n$, and $(1-\alpha)N=N-R_r<j$. Therefore this quantity is at most $Aj\log(en/j)$, as required. This completes the induction.

        Taking $j=1$, every remaining counter is at most
        \[
            A\log(en)=\frac{10k}{c_1\log n}\log(en).
        \]
        For a sufficiently large constant $c_1$, this quantity is smaller than $k$. Therefore every remaining counter is smaller than $k$.
    \end{proof}

    \begin{claim}\label{claim:deletion-correctness}
        After each input deletion query, the maintained objects form a valid $(qh,k)$-LNL data structure for the current graph.
    \end{claim}

    \begin{proof}
        We first verify the size bounds for the sets $H_i$. Deleted vertices are removed from all sets $H_i$, so every $H_i$ remains a subset of the current vertex set. Also, $H_0=V(G^{(r)})$ is maintained directly. Now fix $j\geq 1$. Immediately after initialization, Problem~\ref{prob:init} gives $|H_j|\leq n/4^j\leq n/2^{j+1}$. Similarly, immediately after any rebuild that creates $H_j$, Lemma~\ref{lem:performbatchedsearch} gives
        \[
            |H_j|\leq |H_{j-1}|/4\leq n/2^{j+1},
        \]
        where $n$ is the number of vertices at initialization. Between two rebuilds that recreate $H_j$, the set $H_j$ can grow only when a source at some lower level is declared heavy. During an input deletion query $Z_r$, at most $R(Z_r)=O(|Z_r|\log n)$ sources are made heavy at any one lower level. If $H_j$ is not rebuilt after the query, then the threshold $\tau_{j-1}$ has not decreased since the last time $H_j$ was created. Thus the total number of vertices deleted during the relevant interval is at most $O(n/(c_2 2^j\log^2 n))$. Therefore the total number of additions to $H_j$ from all lower levels is at most
        \[
            O\left(j\log n\cdot \frac{n}{c_2 2^j\log^2 n}\right)
            =O\left(\frac{1}{c_2}\cdot \frac{n}{2^j}\right),
        \]
        where we used $j=O(\log n)$ and absorbed the constant $c_1$ into the $O(\cdot)$ notation. Choosing $c_2$ sufficiently large compared to $c_1$ makes this smaller than the slack between the post-rebuild bound and the required $n/2^j$ bound. Thus $|H_j|\leq n/2^j$ throughout the update sequence.

        Next consider a level $i<\ell$ and a vertex $v\in H_i$. If the level-$i$ list of $v$ is not currently stored, then by construction $v$ was declared heavy at some level at most $i$, and hence $v\in H_{i+1}$. Thus the completeness, distance-estimate, and ordering requirements at level $i$ are vacuous, while the size bound is immediate. Otherwise, let $L=\mathrm{NL}_i^{qh}(v)$, and consider the most recent time at which $L$ was created. If $L$ had size smaller than $2k$ at that time, then the stronger guarantee from Problem~\ref{prob:init} or Lemma~\ref{lem:performbatchedsearch} says that $L$ contained every vertex reachable from $v$ within $qh$ hops in the graph with the next hitting set removed. Later operations only delete vertices from the graph, delete entries from $L$, and possibly add vertices to $H_{i+1}$. Hence the set of currently relevant reachable vertices can only shrink, and every such vertex that remains in the current graph still appears in $L$.

        It remains to consider the case that $L$ had size exactly $2k$ when it was last created. In an input deletion query $Z_r$, the congestion bound for level $i$ implies that the deleted vertices occur in at most $10k|Z_r|$ stored level-$i$ lists in total. Thus the counters of the stored level-$i$ lists are dominated by the game in Claim~\ref{claim:countergame-update}: after the increments caused by $Z_r$, the algorithm removes the $R(Z_r)$ largest counters from level $i$ by making the corresponding sources heavy, and any additional removals caused by lower-level heavy declarations can only help. Hence any stored level-$i$ list that is not removed by a heavy-source declaration has deletion counter smaller than $k$. Since it started with $2k$ entries and only deleted entries are removed from it, its current size is larger than $k$. The completeness condition in the definition of an $(qh,k)$-LNL data structure is therefore not invoked for this list.

        In both cases, the distance and ordering conditions are inherited from the time when the list was built. The lower-bound side is with respect to the fixed graph $G_0$ and is unchanged. The upper-bound side and the ordering condition are only made easier by later vertex deletions and by later additions to $H_{i+1}$, because the relevant restricted distances can only increase and the set of relevant outside vertices can only shrink.

        We next consider level $\ell$. At the end of every input deletion query, the algorithm sets $H_{\ell+1}=\emptyset$ and reconstructs $\mathrm{NL}_\ell^{qh}(v)$ for every $v\in H_\ell$ by applying Lemma~\ref{lem:usetosearch} to the current graph, with size parameter $2k$. Thus every stored level-$\ell$ list is fresh when the query finishes. Lemma~\ref{lem:usetosearch} immediately gives the size bound, the distance-estimate condition, and the ordering condition. If such a list has size at most $k$, then it has size strictly smaller than $2k$, so the completeness guarantee from Lemma~\ref{lem:usetosearch} also gives the required completeness condition. Since $|H_\ell|\leq n/2^\ell\leq k$, the level-$\ell$ congestion requirement is automatic.

        Finally, the congestion invariant at every level $i<\ell$ is preserved. It holds immediately after initialization and after each rebuild by Lemma~\ref{lem:performbatchedsearch}. Between rebuilds, the algorithm only deletes entries from lists or removes entire lists, so no vertex can be added to more lists at a fixed level $i<\ell$. The auxiliary inverse-lists are updated whenever a list entry is inserted or removed, and hence they continue to represent the maintained near-list occurrences.
    \end{proof}

    \begin{claim}\label{claim:deletion-bounds}
        The total work, span, number of auxiliary update queries, and width of the auxiliary update sequence satisfy the bounds stated in the lemma.
    \end{claim}

    \begin{proof}
        Updating the graph component over the whole deletion-only sequence costs $\tilde{O}(m)$ work and $\tilde{O}(n/k)$ span, by deleting each incident edge from the relevant priority queues when one of its endpoints is deleted. The inverse-list and counter updates are charged to incidences between deleted vertices and near-lists. At any level $i<\ell$, a deleted vertex belongs to at most $10k$ lists, and at level $\ell$ there are at most $k$ lists. Thus the total update work charged to such incidences is $\tilde{O}(nk)$, which is dominated by the claimed work bound.

        The heavy-source steps are implemented with the counter queues described above. Since a deletion query $Z_r$ marks only $O(|Z_r|\log n)$ sources per level, and since the total number of deleted vertices in the monotone sequence is at most $n$, the total number of heavy-source declarations over all input deletion queries and all levels is $\tilde{O}(n)$. Clearing the affected list entries and maintaining inverse-lists for these declarations is charged to the sizes of the deleted lists. In addition, after every input deletion query the algorithm reconstructs all level-$\ell$ lists. This costs $\tilde{O}(|H_\ell|k^2q)\leq \tilde{O}(k^3q)$ work and $\tilde{O}(q)$ span per query. Since there are $O(n/k)$ deletion queries, the total cost of the level-$\ell$ constructions is $\tilde{O}(nk^2q)$ work and $\tilde{O}(nq/k)$ span. Thus the heavy-source and terminal-level steps satisfy the claimed bounds.

        Now consider the scheduled rebuilds whose smallest rebuilt level is some $i<\ell$; the terminal level-$\ell$ reconstruction performed after each deletion query was accounted for above. Such a rebuild has cost dominated by the geometric sum over the rebuilt suffix. Since Claim~\ref{claim:deletion-correctness} gives $|H_j|\leq n/2^j$ at all times, Lemmas~\ref{lem:performbatchedsearch} and~\ref{lem:usetosearch} imply that such a rebuild uses
        \[
            \tilde{O}\left(\frac{n}{2^i}k^2q\right)
            \quad\text{work,}\qquad
            \tilde{O}\left(\frac{n}{2^i k}q+q\ell\right)
            \quad\text{span,}
        \]
        and makes $O(n/(2^i k)+\ell)$ update queries to the auxiliary $(h,2k)$-LNL data structure. The value $\tau_i(N)$ decreases at most $O(2^i\log^2 n)$ times over the whole deletion-only sequence. Summing over $i=0,1,\ldots,\ell-1$ gives total rebuild work
        \[
            \sum_{i=0}^{\ell-1}O\left(2^i\log^2 n\right)
            \cdot \tilde{O}\left(\frac{n}{2^i}k^2q\right)
            =\tilde{O}(nk^2q),
        \]
        and similarly total rebuild span $\tilde{O}(nq/k)$. The number of auxiliary update queries is
        \[
            \sum_{i=0}^{\ell-1}O\left(2^i\log^2 n\right)
            \cdot O\left(\frac{n}{2^i k}+\ell\right)
            =O\left(\frac{n}{k}\log^3 n\right),
        \]
        using $2^\ell=O(n/k)$ and $\ell = O(\log n)$. Adding the forwarded input deletions does not change this asymptotic bound.

        It remains only to bound the width of the auxiliary update sequence. The forwarded input deletions have width $1$. During a scheduled rebuild, each invocation of the algorithm in Lemma~\ref{lem:performbatchedsearch} contributes a monotone temporary deletion sequence, followed by a return to the marker. The return step only increases the graph size, and hence does not contribute to any downward crossing in the transcript. Fix a vertex-count threshold $c$ and a rebuilt level $j$. A temporary deletion sequence used to rebuild level $j$ can cross $c$ only if, at the beginning of that temporary computation, the current graph size lies in an interval of length $O(n/2^j)$ above $c$. For every $i\leq j$, rebuilds starting at level $i$ are spaced, by the definition of $\tau_i$, by $\Omega(n/(2^i\log^2 n))$ decreases in the current graph size. Therefore the number of such rebuilds whose level-$j$ temporary computations cross $c$ is
        \[
            \sum_{i=0}^{j}O\left(\frac{n/2^j}{n/(2^i\log^2 n)}+1\right)
            =O(\log^2 n+\ell).
        \]
        Summing over the $O(\ell) = O(\log n)$ possible levels $j$ gives width $O(\log^3 n)$. Level-$\ell$ list constructions make no auxiliary update queries and do not affect the auxiliary transcript. This proves the claim.
    \end{proof}

    The three claims prove the lemma.
\end{proof}

We now give a counterpart of Lemma~\ref{lem:updatedeletions} that applies to arbitrary update sequences. The idea is to use the algorithm in Lemma~\ref{lem:updatedeletions}, but store auxiliary information that allows us to reverse the data structure to a previous state.

\begin{lemma}\label{lem:updatequeries}
    Given access to an $(h, 2k)$-LNL data structure for the original graph $G_0$, along with parameters $q \in [1, n]$ and $f \in [1, n]$, there is a deterministic parallel algorithm with $\tilde{O}(nk^2qf + mf)$ work and $\tilde{O}(nqf/k)$ span that behaves as follows. It takes as input a $(qh, k)$-LNL data structure for $G_0$, which has been initialized as specified in Problem~\ref{prob:init}, along with an arbitrary sequence of $O(nf/k)$ update queries (of transcript width $O(f)$). After each update query, it modifies the $(qh, k)$-LNL data structure to maintain the invariants required in Definition~\ref{def:LNL}. In doing so, it makes $O((nf/k)\log^3 n)$ update queries (of transcript width $O(f\log^3 n)$) to the $(h, 2k)$-LNL data structure.
\end{lemma}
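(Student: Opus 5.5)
The plan is to reduce to Lemma~\ref{lem:updatedeletions} by running the deletion-only algorithm and recording every change to the data structure in an undo log. Each batched deletion query is handled exactly as in Lemma~\ref{lem:updatedeletions}. The effect of the query is the combined edit made by the deletion algorithm and its bookkeeping: near-list entries removed, counters changed, heavy sets modified, rebuilt levels, and inverse-list changes. We push a record of each such edit onto a stack. \emph{Place Marker} pushes a sentinel onto the stack. We also place a marker in the auxiliary $(h,2k)$-LNL data structure, and we record the current schedule values $\tau_i(N)$. \emph{Return to Marker} pops records back to the sentinel and undoes them in reverse order. This can be done in batch, since each record is a set of parallel priority-queue or search-tree operations whose inverse is a batch of the same type. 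We then issue a return-to-marker query to the auxiliary structure. The state after the return coincides exactly with the marked state, so every invariant of Definition~\ref{def:LNL} holds after every query by Claim~\ref{claim:deletion-correctness}. In this way correctness is inherited directly from the deletion-only lemma.

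For the cost, the key step is to charge each stretch of deletions to the transcript width. Let $N$ be the current vertex count. Consider any maximal run of deletions between marker operations. The work done by the deletion-only algorithm on that run, together with the work of undoing it, is at most the work of a deletion-only sequence starting from the current state. The issue is that Lemma~\ref{lem:updatedeletions} amortizes rebuilds against the total decrease of $N$ from $n$, via the schedule $\tau_i$, and against $\tilde O(m)$ graph maintenance. I would instead bound costs per crossing. The rebuild cost triggered when $\tau_i$ decreases, as $N$ crosses a threshold $c$ on the grid of spacing $n/(c_2 2^i\log^2 n)$, is $\tilde O(nk^2q/2^i)$. Since the transcript width is $O(f)$, each threshold $c$ is crossed downward at most $O(f)$ times. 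Summing the per-threshold costs as in Claim~\ref{claim:deletion-bounds} therefore gives $\tilde O(nk^2qf)$ work and $\tilde O(nqf/k)$ span.

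Similarly, each vertex is deleted at most $O(f)$ times: taking $c$ just below the count at which it was deleted, every re-deletion of that vertex must cross such a threshold. So graph maintenance and inverse-list work totals $\tilde O(mf+nkf)$. Undo work is at most the forward work. The per-query level-$\ell$ reconstructions are multiplied by the number of queries, $O(nf/k)$, giving $\tilde O(nk^2qf)$ work and $\tilde O(nqf/k)$ span. For the auxiliary transcript, each forward crossing of the primary counts contributes $O(\log^3 n)$ auxiliary crossings as in Claim~\ref{claim:deletion-bounds}, and returns to markers only increase counts. This gives auxiliary width $O(f\log^3 n)$ and $O((nf/k)\log^3 n)$ auxiliary queries.

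The main obstacle is correctness of the heavy-vertex size bound (and the counter game) across reverts. Claim~\ref{claim:deletion-correctness} bounds $|H_j|$ using the number of deletions since the last rebuild of level $j$. After a revert, the deleted count and the schedule values are restored together with $H_j$, so the bound still holds: the state is literally a past valid state. The remaining subtlety is that the spacing argument for auxiliary width must be applied per crossing rather than globally. I would handle this by restating the spacing argument of Claim~\ref{claim:deletion-bounds} for an arbitrary path of $N$ with at most $O(f)$ downward crossings of each level.
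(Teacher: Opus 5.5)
Your overall plan matches the paper's. You run the deletion-only procedure of Lemma~\ref{lem:updatedeletions} with a rollback log, and you echo markers and returns to the auxiliary $(h,2k)$-structure. Correctness follows because every reverted state is a previously valid state. Rebuilds and auxiliary width are charged to downward crossings of the $\tau_i$ grid points, and the width assumption allows each crossing only $O(f)$ times. The paper logs individual memory-cell writes and restores by sorting records by address and timestamp, rather than inverting batched data-structure operations, but that is an implementation detail.

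The step that fails is your claim that every vertex is deleted at most $O(f)$ times. A vertex can be re-deleted at a lower vertex count than where it was first deleted, so no single threshold is crossed by all of its deletions. For example, take $f=1$ and repeat the following cycle $\Theta(n/k)$ times: place a marker, delete $\{v\}$, return to the marker, then delete a fresh set of $k$ other vertices. Each threshold is crossed downward at most twice, and there are $O(n/k)$ queries in total. Yet $v$ is deleted $\Theta(n/k)$ times.

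So your derivation of the $\tilde{O}(mf+nkf)$ bound for graph maintenance and inverse-list work does not go through. If $v$ had degree $\Theta(m)$, this sequence would cost $\Theta(mn/k)\gg mf$ in graph updates. The fix is the paper's aggregate count:
\begin{itemize}
\item There are $O(nf/k)$ deletion queries, each of size at most $k$, so there are $O(nf)$ deleted-vertex occurrences in total, counting repetitions across rollbacks.
\item By the preliminary assumption that all in- and out-degrees are $O(m/n)$, this gives $\tilde{O}(mf)$ work for the graph component.
\item Each deleted vertex lies in at most $10k$ stored lists per level, which gives $\tilde{O}(nkf)$ for counters and inverse-lists.
\item Similarly, there are $\tilde{O}(nf)$ heavy-source declarations.
\end{itemize}
With this replacement, the rest of your accounting goes through: level-$\ell$ reconstructions per query, rebuilds per threshold crossing, undo charged to forward work, and auxiliary width $O(f\log^3 n)$.
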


\begin{proof}
    We use the deletion-only algorithm from Lemma~\ref{lem:updatedeletions}, but make every change reversible. The maintained state consists of the $(qh,k)$-LNL data structure, together with the auxiliary bookkeeping used by the algorithm in Lemma~\ref{lem:updatedeletions} (i.e., the counter queues, deletion counters, inverse-lists, heavy-source marks, and rebuild-schedule values).

    The algorithm stores a stack transcript in flat arrays. Each transcript record contains a memory-cell address, the old value stored at that address, and a timestamp. A marker stores a stack height and an initially empty set of processors. The update operations are implemented as follows.
    \begin{enumerate}
        \item \textbf{Record writes.} Before any memory cell in the maintained state is overwritten, the processor performing the write appends a record for that cell to the transcript. In a parallel step, the active processors reserve a contiguous block of stack positions by prefix sums and then write their records into this block. If there is an open marker, each processor that appends at least one record also inserts its identifier into the processor set stored with the most recent marker. This bookkeeping lets a later rollback touch only processors that actually wrote after the marker was placed.

        \item \textbf{Handle deletion queries.} On a batched deletion query, run exactly the deletion procedure from Lemma~\ref{lem:updatedeletions}, including the heavy-source declarations and the scheduled rebuilds. All changes to the maintained state are recorded by the preceding rule. The same deletion query is forwarded to the auxiliary $(h,2k)$-LNL data structure, as in Lemma~\ref{lem:updatedeletions}.

        \item \textbf{Place markers.} On a place-marker query, push a marker containing the current transcript height and an empty processor set. Then issue a place-marker query to the auxiliary $(h,2k)$-LNL data structure.

        \item \textbf{Return to markers.} On a return-to-marker query, let $a$ be the transcript height stored by the most recent marker. Collect the records in transcript positions larger than $a$ (using the processor set stored with the marker to avoid scanning inactive processors), and sort these records by memory-cell address and then by timestamp. For each memory-cell address that appears, restore the old value stored in the earliest record for that address. Then truncate the transcript back to height $a$, discard the marker and its processor set, and issue the corresponding return-to-marker query to the auxiliary $(h,2k)$-LNL data structure.
    \end{enumerate}
    The processor sets are maintained with parallel binary search trees, so they add only polylogarithmic overhead per recorded write. Thus the maintained structure and the auxiliary structure are rolled back in lockstep, and rollback work is charged only to records and processors used after the returned marker was placed.

    For the description of the algorithm and Lemma \ref{lem:updatedeletions}, the following two claims hold immediately:

    \begin{claim}\label{claim:rollback-exactness}
        After a return-to-marker operation, the complete state of the maintained $(qh,k)$-LNL data structure and all of its auxiliary bookkeeping is exactly the state that was present when the corresponding marker was placed.
    \end{claim}

    % \begin{proof}
    %     Consider the transcript records written after the marker was placed. Each such record stores an address, the old value at that address, and the time of the write. For a fixed memory address, the value present at the time of the marker is exactly the old value stored in the earliest record for that address among these records. Thus sorting the records by address and then by timestamp, keeping the first record for each address, and restoring these selected old values in parallel returns every modified cell to its marker-time value. Cells with no such record were never modified after the marker and are already correct.

    %     All components relevant to future behavior of the algorithm are recorded in this way, including the stored graph, the sets $H_i$, the near-lists, the deletion counters, the counter queues, the inverse-lists, the heavy-source marks, and the schedule values $\tau_i$. After the later records are discarded, the complete state of the maintained structure is therefore identical to the state at the time the marker was placed. The same marker and return-to-marker operations are echoed to the auxiliary $(h,2k)$-LNL data structure, so the auxiliary structure is restored to the matching graph as well.
    % \end{proof}

    \begin{claim}\label{claim:arbitrary-correctness}
        After every update query in the input sequence, the maintained objects form a valid $(qh,k)$-LNL data structure for the current graph.
    \end{claim}

    % \begin{proof}
    %     We argue by induction over the input update sequence. Place-marker queries do not change the represented graph or any near-list invariant. Return-to-marker queries restore, by Claim~\ref{claim:rollback-exactness}, a state that was previously present immediately after a valid update step, and hence restore a valid LNL data structure. Finally, consider a batched deletion query. Starting from the current state, the algorithm applies exactly the deletion-handling procedure analyzed in Lemma~\ref{lem:updatedeletions}: it forwards the deletion to the auxiliary structure, updates counters and inverse-lists, performs the prescribed heavy-source declarations, reconstructs the level-$\ell$ near-lists, and performs precisely the scheduled rebuilds required by the current schedule values. The proof of Lemma~\ref{lem:updatedeletions} is local to the current monotone deletion step, so it applies from this restored state as well. Thus the invariants hold after the deletion query. This completes the induction.
    % \end{proof}

    All that remains is to bound the total work and span, and the number of auxiliary update queries (along with the width of the update sequence).

    % \begin{claim}\label{claim:arbitrary-bounds}
    %     The total work, span, number of auxiliary update queries, and width of the auxiliary update sequence satisfy the bounds stated in the lemma.
    % \end{claim}

    %\begin{proof}
        First consider the forward computation, ignoring the cost of undoing it. The sequence contains $O(nf/k)$ deletion batches, each of size at most $k$, so the total number of deleted vertex occurrences, counting repetitions caused by rollbacks, is $O(nf)$. Since the maximum in- and out-degree in $G_0$ is $O(m/n)$, the total work spent updating the graph component is $\tilde{O}(mf)$. The corresponding span is $\tilde{O}(nf/k)$, which is dominated by the claimed bound.

        The remaining forward work is charged exactly as in Lemma~\ref{lem:updatedeletions}. The heavy-source rule performs $O(|Z|\log n)$ declarations per level after a deletion batch $Z$, so across all forward deletion batches it performs $\tilde{O}(nf)$ declarations. Clearing the affected lists, maintaining inverse-lists, and reconstructing level-$\ell$ near-lists use $\tilde{O}(nk^2qf)$ work and $\tilde{O}(nqf/k)$ span. For scheduled rebuilds, the transcript-width assumption implies that every vertex-count threshold is crossed downward only $O(f)$ times. Hence the rebuild charging argument from Claim~\ref{claim:deletion-bounds} is repeated only an $O(f)$ number of times, giving $\tilde{O}(nk^2qf)$ additional work and $\tilde{O}(nqf/k)$ additional span.

        Undo operations do not change these bounds. Each stack record is discarded at most once, and restoring a set of records has near-linear work in the number of discarded records and polylogarithmic span. The processor-set bookkeeping described above is charged to the same records. Thus the total rollback work is charged to the forward work recorded on the transcript stack. Place-marker and return-to-marker records themselves contribute only $O(nf/k)$ additional stack operations.

        The number of update queries issued to the auxiliary $(h,2k)$-LNL data structure is bounded in the same way. Lemma~\ref{lem:updatedeletions} issues $O((n/k)\log^3 n)$ auxiliary updates over a width-one deletion sequence; with transcript width $O(f)$, this becomes $O((nf/k)\log^3 n)$. The input place-marker and return-to-marker queries that are echoed to the auxiliary structure add only $O(nf/k)$ more queries.

        Finally, fix a vertex-count threshold $c$ in the transcript of the auxiliary update sequence. Downward crossings caused by forwarded input deletions occur at most $O(f)$ times, by the assumed width of the input sequence. Downward crossings caused by temporary deletions inside scheduled rebuilds are bounded exactly as in Claim~\ref{claim:deletion-bounds}, except that the current graph can enter the relevant interval above $c$ at most $O(f)$ times rather than once. Thus the $O(\log^3 n)$ bound from Lemma~\ref{lem:updatedeletions} is multiplied by $O(f)$. Return-to-marker operations only increase the graph size, so they do not add downward crossings. Therefore the auxiliary update sequence has width $O(f\log^3 n)$.%
    %\end{proof}
    %
    %The three claims prove the lemma.
\end{proof}

\section{Putting It All Together}
\label{sec:finalalg}

Now we present our algorithm for parallel single-source shortest paths.

\begin{theorem}\label{thm:final}
    Let $s \in V_0$ be any vertex in the original graph $G_0$, and let $t \in [1, n]$. There is a deterministic parallel algorithm that computes single-source shortest paths from $s$ in the graph $G_0$ with $O(n^{1+o(1)}t^2 + m^{1+o(1)})$ work and $\tilde{O}(n/t)$ span.
\end{theorem}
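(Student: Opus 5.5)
Set $k = t$ and build a tower of LNL data structures with geometrically growing hop bounds. Put $g = 2^{\lceil\sqrt{\log n}\,\rceil}$ and $L = \lceil \log_g n\rceil = O(\sqrt{\log n})$. For $j = 0,1,\ldots,L$, let level $j$ be a $(g^j, 2^{L-j}t)$-LNL data structure $\mathcal{D}_j$ for $G_0$. Level $j$ has size parameter $k_j = 2^{L-j}t$, so the size parameter halves at each level. This is exactly the pairing required by Lemmas~\ref{lem:initusingother} and~\ref{lem:updatequeries}: level $j+1$ has size $k_j/2$ and hop bound $qh$ with $q = g$. The top level $\mathcal{D}_L$ has hop bound at least $n$ and size parameter $t$.

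The base level $\mathcal{D}_0$ has hop bound $1$, and we construct it directly. Its near-lists are the $2k_0$ closest out-neighbours, read off the priority queues. Batched deletions refill a list from the out-neighbour priority queue. Heavy vertices and counters are handled as in Lemma~\ref{lem:updatedeletions}, except that the lower-level searches are replaced by direct neighbour reads, and markers use the rollback stack of Lemma~\ref{lem:updatequeries}. I expect this to cost $\tilde O(mf + nk_0 f)$ work per unit of transcript width $f$.

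The SSSP algorithm simulates Dijkstra in rounds of $t$ vertices on $\mathcal{D}_L$. We maintain the settled set $P$ and, in a parallel priority queue, tentative distances $\delta(u)$ for the vertices in the near-lists of settled vertices and their out-neighbours. In each round, we run Lemma~\ref{lem:fromStoSprime} (hop bound $\geq n$, so $h$-hop distances are true distances in the current graph $G=G_0[V\setminus P]$) from the frontier of candidates, with estimates $\delta$. This returns the $t$ closest unsettled vertices together with exact distances. By the ordering and completeness properties, these are precisely the next $t$ vertices Dijkstra would settle.

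To justify this, note that every shortest path to a new vertex leaves $P$ through an edge $(p,u)$ and then stays inside $G$. We therefore seed $S$ with the at most $t$ smallest-keyed unsettled heads of such edges. The truncation to $t$ seeds is harmless, by an ordering argument like the one in Claim~\ref{claim:dominance}. After settling the new vertices, we issue one batched deletion of size at most $t$ to $\mathcal{D}_L$ and relax their out-edges into the priority queue. Each round has $\tilde O(1)$ span plus the update cost, there are $n/t$ rounds, and so the search costs $\tilde O(nt)$ work.

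The update sequence at the top level is deletion-only: $O(n/t)$ queries of width $1$. Applying Lemma~\ref{lem:updatequeries} inductively downward, level $L-1$ receives $O((n/k_L)\log^3 n)$ queries of width $O(\log^3 n)$. Level $j$ receives $O(n f_j / k_j)$ queries of width $f_j = O(\log^{3(L-j)} n) = 2^{O(\log n \log\log n/\sqrt{\log n})} = n^{o(1)}$. The count matches the hypothesis of the lemma because $k_j = 2^{L-j}t$ is larger than the parent's size, and this is consistent with the $O(nf/k)$ requirement.

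Each level is initialized once via Lemma~\ref{lem:initusingother}, whose own auxiliary queries (width $O(\log n)$) we add to the count. Each level then costs $\tilde O(n k_j^2 g f_j + m f_j)$ work and $\tilde O(n g f_j / k_j)$ span. Using $k_j \le 2^{L}t = n^{o(1)}t$ and $g = n^{o(1)}$, the total work is $n^{1+o(1)}t^2 + m^{1+o(1)}$.

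The main obstacle is the span. A naive bound gives $n^{1+o(1)}/t$, but the theorem demands $\tilde O(n/t)$, so the $g f_j$ factors must be absorbed. The halving of $k_j$ is what I would use for this. I would choose the size parameters to grow fast enough down the tower, taking $k_j = t\,(g\log^3 n)^{L-j}$ instead of $2^{L-j}t$, so that $n g f_j / k_j = \tilde O(n/t)$ at each level. The extra factor in $k_j$ is only $n^{o(1)}$, so the work becomes $n^{1+o(1)}t^2$. I would then recheck that the query-count hypothesis of Lemma~\ref{lem:updatequeries} still holds with these $k_j$. Since the span costs of different levels add rather than multiply, the total span is $L \cdot \tilde O(n/t) = \tilde O(n/t)$. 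Finally, I would confirm that $t \le n$ causes no trouble by capping $k_j$ at $n$, in which case the whole graph fits in one list.
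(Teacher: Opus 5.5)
The gap is the final step, getting the span from $n^{1+o(1)}/t$ down to $\tilde{O}(n/t)$. Before that point your plan is a workable variant of the paper's. Your tower climbs to hop bound $\ge n$ and makes one multi-source call to Lemma~\ref{lem:fromStoSprime} per round. The paper instead caps the tower at hop $T\approx t$ and runs $T$ single-source $T$-hop searches per round. Your version does give $n^{1+o(1)}t^2+m^{1+o(1)}$ work and $n^{1+o(1)}/t$ span.

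Your proposed fix for the span fails for two reasons.

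\textbf{The enlarged $k_j$ is not an $n^{o(1)}$ factor.} Your hop bounds go from $1$ to $g^L\ge n$ in steps of $g$, so $(g\log^3 n)^L\ge g^L\ge n$. This gives $k_0\ge tn>n$, outside the allowed range, and the base level alone costs at least $nk_0^2\ge n^3t^2$ work.

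\textbf{Enlarging lower size parameters cannot lower the certified span.} Level $j+1$ sends $\Theta\bigl((nf_{j+1}/k_{j+1})\log^3 n\bigr)$ queries of width $\Theta(f_{j+1}\log^3 n)$ down to level $j$. Lemma~\ref{lem:updatequeries} accepts only $O(nf/k)$ queries, so you must take $f_j=\Theta\bigl(f_{j+1}\log^3 n\cdot\max\{1,k_j/k_{j+1}\}\bigr)$. Hence $f_j/k_j$ is at least of order $\log^3 n\cdot f_{j+1}/k_{j+1}$ however you choose the $k_j$. The level-$j$ span $\tilde{O}(ngf_j/k_j)=\tilde{O}(ng(\log^3 n)^{L-j}/t)$ therefore does not move. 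Mechanically, every deletion a level receives triggers level-$\ell$ rebuilds of span $\tilde{O}(q)=\tilde{O}(g)$, whatever that level's own $k$ is. The lemmas are also stated only for an auxiliary structure of size exactly $2k$. In particular, the top structure still has $k_L=t$, so it alone already has span $\tilde{O}(ng/t)$.

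\textbf{The missing idea is the paper's reparametrization.} Once you have work $n^{1+o(1)}t^2+m^{1+o(1)}$ and span $n^{1+o(1)}/t$ for every admissible $t$, rerun the algorithm with $t$ multiplied by the explicit subpolynomial span overhead, which is of the form $2^{O(\sqrt{\log n}\log\log n)}$. The span becomes $\tilde{O}(n/t)$. The work only gains the square of a subpolynomial factor, so it stays $n^{1+o(1)}t^2+m^{1+o(1)}$.

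If the enlarged parameter would leave the admissible range, i.e.\ $t\ge n^{1-o(1)}$, use repeated min-plus squaring instead. That costs $\tilde{O}(n^3)\le n^{1+o(1)}t^2$ work and polylogarithmic span. This fallback is also the clean way to handle $k_0=2^Lt>n$. Capping $k_j$ at $n$ does not work, because it breaks the exact $k\mapsto 2k$ pairing the lemmas require.
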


\begin{proof}
    Observe that it will be sufficient to give an algorithm with $O(n^{1+o(1)}t^2 + m^{1+o(1)})$ work and $O(n^{1+o(1)}/t)$ span, for all $t \in [1, n]$. To get the work and span bounds claimed in the theorem statement, we either (i) increase $t$ by an appropriate subpolynomial factor before invoking the algorithm, if the original value of $t$ is sufficiently smaller than $n$, or (ii) use repeated squaring (see e.g. \cite{williams2014faster}) to solve the problem in $\tilde{O}(n^3)$ work and $\tilde{O}(1)$ span, if the original value of $t$ is already at least $n^{1-o(1)}$.

    We now describe the algorithm. Set $\rho=2^{\lceil \sqrt{\log n}\rceil}.$ Let $T$ be the largest power of $\rho$ that is at most $\min\{t,n/\rho\}$, taking $T=1$ if this minimum is smaller than $\rho$. Then $T\leq t$ and $T\geq t/n^{o(1)}$. It is enough to obtain $\tilde{O}(n^{1+o(1)}T^2+m^{1+o(1)})$ work and $\tilde{O}(n^{1+o(1)}/T)$ span. Let $J=\log_\rho T$. For $j=0,1,\ldots,J$, define
    \[
        h_j=T/\rho^j
        \qquad\text{and}\qquad
        k_j=T\cdot 2^j.
    \]
    Thus $h_0=T$, $h_J=1$, $k_j=2k_{j-1}$ for every $j\geq 1$, and $J\leq \sqrt{\log n}$. Moreover $k_J=T\cdot 2^J\leq T\rho\leq n$.

    The algorithm constructs data structures $\mathcal D_0,\mathcal D_1,\ldots,\mathcal D_J$, where $\mathcal D_j$ represents an $(h_j,k_j)$-LNL data structure on the current unsettled graph. The structures are built from the bottom up, starting with the one-hop structure.
    \begin{enumerate}
        \item \textbf{Base level.} The structure $\mathcal D_J$ is implemented directly. It stores the current graph by the out-neighbor priority queues from Definition~\ref{def:LNL}, together with the reverse adjacency information needed to delete a vertex from all affected queues. Whenever a lower-level lemma asks for a one-hop search primitive at this level, we implement the relaxation directly from these queues: from each active source we inspect the required number of smallest outgoing edges, include the zero-hop self candidate, and keep the required number of smallest resulting vertices. This gives exactly the guarantees of Lemmas~\ref{lem:fromStoSprime} and~\ref{lem:usetosearch} for $h_J=1$, with the same work and span. Batched deletions update the graph queues, and markers and returns to markers are handled by the same transcript-stack mechanism as in Lemma~\ref{lem:updatequeries}.
        \item \textbf{Higher levels.} For $j=J-1,J-2,\ldots,0$, assume that $\mathcal D_{j+1}$ has already been initialized and supports update queries, with all lower update queries recursively propagated to $\mathcal D_{j+2},\ldots,\mathcal D_J$. We first place a marker in the already-built structure $\mathcal{D}_{j+1}$ (which is recursively propagated through the other already-built structures). Then we apply Lemma~\ref{lem:initusingother} with $h=h_{j+1}$, $k=k_j$, and $q=\rho$, using $\mathcal D_{j+1}$ as the auxiliary $(h_{j+1},2k_j)$-LNL data structure. Since $2k_j=k_{j+1}$ and $\rho h_{j+1}=h_j$, this initializes $\mathcal D_j$ as an $(h_j,k_j)$-LNL data structure. The update queries issued during this initialization are implemented by Lemma~\ref{lem:updatequeries} and are passed recursively through the lower levels. After $\mathcal D_j$ has been constructed, we return $\mathcal D_{j+1}$ to the marker placed at the beginning of this step (which is again recursively propagated through the other already-built structures). Thus all lower structures again represent the original graph $G_0$. Finally, we equip $\mathcal D_j$ with the update algorithm of Lemma~\ref{lem:updatequeries}, using $\mathcal D_{j+1}$ as its auxiliary structure.
    \end{enumerate}

    We now run a batched version of Dijkstra's algorithm using the top structure $\mathcal D_0$. The algorithm maintains a settled set $R$, exact distance labels $d(s,v)$ for vertices $v\in R$, and a parallel priority queue $Q$ of unsettled boundary vertices. The queue is implemented lazily: whenever an edge $(u,v)$ with $u\in R$ and $v\notin R$ is relaxed, we insert a record for $v$ with key $d(s,u)+w(u,v)$. Records that are not the current minimum for their endpoint are ignored when they are extracted.

    Initially $R=\{s\}$ and $d(s,s)=0$. We insert all outgoing edges of $s$ into $Q$, and issue the batched deletion query $\{s\}$ to $\mathcal D_0$, with the resulting lower-level update queries propagated through the hierarchy. While $Q$ contains a finite current key for an unsettled vertex, we perform one round as follows.
    \begin{enumerate}
        \item Extract the $T$ unsettled vertices with smallest current keys, or all such vertices if fewer than $T$ exist. Let this set be $S$, and write $\lambda(x)$ for the extracted key of $x\in S$.
        \item For every $x\in S$ in parallel, invoke Lemma~\ref{lem:usetosearch} on the top-level structure $\mathcal D_0$ with multiplier $q = 1$. Let $Y_x$ be the returned set, and let $\tilde{d}_x(y)$ be the returned estimate for $y\in Y_x$. For every $y\in Y_x$, create a candidate label
        \[
            \lambda(x)+\tilde{d}_x(y).
        \]
        We also create the self-candidate $(x,\lambda(x))$ for every $x\in S$. For each vertex that receives one or more candidate labels, keep only the smallest one.
        \item Let $B$ be the set of the $T$ vertices with smallest candidate labels, or all candidate vertices if fewer than $T$ exist. Add every vertex $y\in B$ to $R$ and set $d(s,y)$ equal to its candidate label.
        \item Relax all outgoing edges of vertices in $B$, inserting the resulting records into $Q$ for endpoints that are not yet settled. Reinsert every vertex of $S\setminus B$ with its key $\lambda(x)$. Finally, issue the batched deletion query $B$ to $\mathcal D_0$ and propagate the resulting update queries through the lower levels.
    \end{enumerate}
    When $Q$ contains no finite current key for an unsettled vertex, every unsettled vertex is given distance $+\infty$.

    We next prove correctness. We use the following invariant: at the beginning of every round, the set $R$ consists exactly of the vertices already deleted from $\mathcal D_0$, every vertex in $R$ has its exact shortest-path distance from $s$, and $Q$ represents the minimum boundary value
    \[
        \lambda^*(v)=\min_{(u,v)\in E_0,\,u\in R}\bigl(d(s,u)+w(u,v)\bigr)
    \]
    for every unsettled vertex $v$, up to stale records.

    \begin{claim}\label{claim:hierarchy-correct}
        Throughout the batched Dijkstra algorithm, every structure $\mathcal D_j$ represents the induced graph $G_0[V_0\setminus R]$ and supports the search and update operations used by the algorithm.
    \end{claim}

    \begin{proof}
        This is true after the initialization phase, because each use of Lemma~\ref{lem:initusingother} is followed by returns to the markers placed in the lower structures. Hence all structures represent $G_0$ before the Dijkstra phase begins. During the Dijkstra phase, the only updates issued to $\mathcal D_0$ are batched deletions of sets $B$ of size at most $T=k_0$. Lemma~\ref{lem:updatequeries} implements each such update and passes an update sequence of the required width to $\mathcal D_1$; applying the same lemma recursively implements the induced update sequences at all lower levels. The base structure $\mathcal D_J$ implements the same update interface directly. Therefore, after each round, all structures represent the same current graph, namely $G_0[V_0\setminus R]$.
    \end{proof}

    \begin{claim}\label{claim:settled-exact}
        In every round, each vertex added to $R$ receives its exact shortest-path distance from $s$.
    \end{claim}

    \begin{proof}
        Assume the invariant holds at the beginning of a round, and let $G=G_0[V_0\setminus R]$ be the current unsettled graph. Let $y\in B$, and let $\mu(y)$ be the candidate label assigned to $y$. First, $\mu(y)$ is never smaller than the true distance from $s$ to $y$. Indeed, a self-candidate has value $\lambda(y)$, which is the length of a path obtained by appending one boundary edge to an already settled shortest path. A candidate created from a search rooted at $x$ has value $\lambda(x)+\tilde{d}_x(y)$, and the lower-bound guarantee of Lemma~\ref{lem:usetosearch} gives $\tilde{d}_x(y)\geq d_{G_0}(x,y)$. Since $\lambda(x)$ is the length of an actual path from $s$ to $x$, the triangle inequality gives $\lambda(x)+\tilde{d}_x(y)\geq d_{G_0}(s,y)$. Thus $\mu(y)\geq d_{G_0}(s,y)$.

        Suppose for contradiction that $\mu(y)>d_{G_0}(s,y)$. Choose a simple shortest $s$--$y$ path, let $r$ be the last vertex of this path that belongs to $R$, and let $z$ be the vertex immediately after $r$. The vertex $z$ exists because $y$ is unsettled at the beginning of the round. Let $P_{z,y}$ be the suffix of this path from $z$ to $y$. By the choice of $r$, the suffix $P_{z,y}$ lies entirely in the current graph $G$.

        The prefix of the chosen path up to $z$ is itself a shortest path to $z$. Since $r\in R$, the inductive invariant gives $d(s,r)=d_{G_0}(s,r)$, and therefore
        \[
            d_{G_0}(s,z)=d_{G_0}(s,r)+w(r,z)=d(s,r)+w(r,z).
        \]
        Thus $\lambda^*(z)\leq d(s,r)+w(r,z)=d_{G_0}(s,z)$. The reverse inequality holds because every boundary record for $z$ is the length of an actual path from $s$ to $z$. Hence
        \[
            \lambda^*(z)=d_{G_0}(s,z)\leq d_{G_0}(s,y)<\mu(y).
        \]

        We first show that $z\in S$. If not, then either $|S|<T$ and all finite boundary vertices are in $S$, contradicting that $z$ has finite boundary value, or $|S|=T$ and every $x\in S$ satisfies $\lambda(x)\leq \lambda^*(z)<\mu(y)$. In the latter case, the $T$ self-candidates of the vertices in $S$ all have value strictly smaller than $\mu(y)$, so $y$ could not be among the $T$ vertices of minimum candidate value. Therefore $z\in S$, and $\lambda(z)=d_{G_0}(s,z)$.

        If $P_{z,y}$ uses at most $T=h_0$ edges, then $d_G^{T}(z,y)\leq w(P_{z,y})$. If $y\in Y_z$, the upper-bound guarantee of Lemma~\ref{lem:usetosearch} gives a candidate for $y$ of value at most
        \[
            \lambda(z)+d_G^{T}(z,y)
            \leq d_{G_0}(s,z)+w(P_{z,y})
            =d_{G_0}(s,y)<\mu(y),
        \]
        a contradiction to the definition of $\mu(y)$. If $y\notin Y_z$ and $|Y_z|<T$, completeness gives the same contradiction. Finally, if $y\notin Y_z$ and $|Y_z|=T$, then the ordering guarantee implies that every $a\in Y_z$ has
        \[
            \lambda(z)+\tilde{d}_z(a)
            \leq \lambda(z)+d_G^{T}(z,y)
            \leq d_{G_0}(s,y)<\mu(y).
        \]
        Thus there are $T$ candidate vertices with value smaller than $\mu(y)$, again contradicting the choice of $B$.

        It remains to consider the case where $P_{z,y}$ uses more than $T$ edges. Let $a_1,\ldots,a_T$ be the first $T$ vertices after $z$ on this suffix. Each $a_r$ is reachable from $z$ within $T$ hops in $G$, and the chosen path prefix from $s$ to $a_r$ has length at most $d_{G_0}(s,y)<\mu(y)$. If all vertices $a_1,\ldots,a_T$ belong to $Y_z$, then they give $T$ candidate vertices of value smaller than $\mu(y)$. Otherwise, choose some $a_r\notin Y_z$. If $|Y_z|<T$, completeness is contradicted. If $|Y_z|=T$, ordering implies that every vertex returned in $Y_z$ has candidate value at most $\lambda(z)+d_G^T(z,a_r)$, and this is at most the length of the chosen path prefix from $s$ to $a_r$, hence smaller than $\mu(y)$. In all cases there are $T$ candidate vertices with value smaller than $\mu(y)$, contradicting the choice of~$B$. Putting everything together, we have shown that $\mu(y)=d_{G_0}(s,y)$.
    \end{proof}

    \begin{claim}\label{claim:dijkstra-progress}
        The algorithm terminates after $O(n/T)$ rounds, and when it terminates every vertex outside $R$ is unreachable from $s$.
    \end{claim}

    \begin{proof}
        If a round starts with at least $T$ unsettled vertices of finite boundary value, then $|S|=T$. The self-candidates of the vertices in $S$ ensure that at least $T$ candidate vertices exist, so the round settles exactly $T$ vertices.

        Now consider a round in which fewer than $T$ vertices are settled. Then fewer than $T$ candidate vertices were produced. We claim that no unsettled reachable vertex remains afterward. Suppose otherwise, and let $v$ be such a vertex. Choose a simple shortest path to $v$, let $r$ be the last vertex of the old settled set $R$ on this path, and let $z$ be the vertex immediately after $r$. As in the proof of Claim~\ref{claim:settled-exact}, the suffix from $z$ to $v$ lies in the old current graph and $\lambda^*(z)<+\infty$. Since fewer than $T$ vertices were settled, Step~1 must have extracted all unsettled vertices with finite current boundary key, and hence $z\in S$.

        If the suffix from $z$ to $v$ has at most $T$ edges, then $v$ is reachable from $z$ within the search radius. Since fewer than $T$ candidates were produced in the round, the returned set $Y_z$ cannot have size $T$; otherwise its vertices alone would give $T$ candidates. Thus $|Y_z|<T$, and the completeness guarantee of the search from $z$ implies that $v\in Y_z$. Then $v$ would receive a candidate label and, because fewer than $T$ candidates were produced in total, would be included in $B$, contradicting that it remains unsettled after the round.

        If the suffix from $z$ to $v$ has more than $T$ edges, let $a_1,\ldots,a_T$ be the first $T$ vertices on this suffix. All of them are reachable from $z$ within $T$ hops. If $|Y_z|=T$, then the search from $z$ already contributes $T$ candidates. If $|Y_z|<T$, completeness implies that all vertices $a_1,\ldots,a_T$ belong to $Y_z$, and again the search from $z$ contributes $T$ candidates. Both alternatives contradict the assumption that fewer than $T$ candidates were produced. Thus no reachable unsettled vertex remains after such a round. All earlier rounds settle exactly $T$ vertices, so there are $O(n/T)$ rounds in total.
    \end{proof}

    \begin{claim}\label{claim:putting-bounds}
        The total work and span satisfy the bounds in the theorem statement.
    \end{claim}

    \begin{proof}
        We first account for the LNL hierarchy. Each initialization step uses Lemma~\ref{lem:initusingother}, and each update implementation uses Lemma~\ref{lem:updatequeries}. Passing update sequences down one level increases the transcript width and the number of lower-level update queries by only a polylogarithmic factor, and also incurs the factor $k_{j+1}/k_j=2$ when the bound is rewritten in the form $O(nf/k_{j+1})$. Since $J\leq \sqrt{\log n}$, the total blowup over the hierarchy is $n^{o(1)}$.

        Therefore the total work spent building and maintaining all non-base LNL structures is bounded by
        \[
            n^{o(1)}\sum_{j=0}^{J-1}\tilde{O}(nk_j^2\rho+m)
            =\tilde{O}(n^{1+o(1)}T^2+m^{1+o(1)}),
        \]
        because $k_j=T\cdot 2^j$, $2^J\leq n^{o(1)}$, and $\rho=n^{o(1)}$. The corresponding span is
        \[
            n^{o(1)}\sum_{j=0}^{J-1}\tilde{O}\left(\frac{n\rho}{k_j}\right)
            =\tilde{O}(n^{1+o(1)}/T).
        \]
        The direct base structure contributes $\tilde{O}(m^{1+o(1)})$ work and no larger span: each edge is deleted from the base priority queues only when one of its endpoints is deleted, and rollback work is charged to the transcript records created by the same operations.

        It remains to account for the batched Dijkstra layer outside the hierarchy. Each round performs $|S|\leq T$ top-level searches with multiplier $1$. Each such search costs $\tilde{O}(T^2)$ work and $\tilde{O}(1)$ span, so over $O(n/T)$ rounds these searches use $\tilde{O}(nT^2)$ work and $\tilde{O}(n/T)$ span. Every edge of $G_0$ is inserted into the Dijkstra priority queue once, when its tail is settled, and stale records are discarded only after being extracted. The reinsertion of vertices in $S\setminus B$ contributes $O(T)$ priority-queue operations per round, and hence $O(n)$ operations in total. Thus all boundary-queue operations and edge relaxations contribute $\tilde{O}(m+n)$ work in total and $\tilde{O}(n/T)$ span over all rounds, which is within the same bounds. Combining these estimates gives $\tilde{O}(n^{1+o(1)}T^2+m^{1+o(1)})$ work and $\tilde{O}(n^{1+o(1)}/T)$ span. Since $T\geq t/n^{o(1)}$, the theorem follows.
    \end{proof}

    Claims~\ref{claim:hierarchy-correct},~\ref{claim:settled-exact}, and~\ref{claim:dijkstra-progress} prove correctness, and Claim~\ref{claim:putting-bounds} gives the required work and span bounds.
\end{proof}

\section*{Acknowledgments}
All ideas in this paper are solely due to the human authors, and the human authors are fully responsible for the contents of this paper. After writing a full draft of the paper, the authors used ChatGPT 5.5 Pro to improve the presentation of some of the proofs, which were then edited again by the authors.

\bibliographystyle{alpha}
\bibliography{refs}
\end{document}